\newcommand{\Rmnum}[1]{\expandafter\@slowromancap\romannumeral #1@}
\newtheorem{theorem}{Theorem}[section]
\newtheorem{conjecture}[theorem]{Conjecture}
\newtheorem{definition}[theorem]{Definition}
\newtheorem{lemma}[theorem]{Lemma}
\newtheorem{proposition}[theorem]{Proposition}
\newcommand{\bigO}{{\mathcal{O}}}
\newcommand{\lb}{\left\llbracket}
\newcommand{\rb}{\right\rrbracket}
\title{Fractal Dimension versus Process Complexity}
\author[1,5]{Joost J. Joosten}
\author[2,5]{Fernando Soler-Toscano}
\author[3,4,5]{Hector Zenil\thanks{Corresponding author: hector.zenil@algorithmicnaturelab.org}}
\affil[1]{Departamento de L\`ogica, Hist\`oria i Filosofia de la Ci\'encia, Universitat de Barcelona, Barcelona, Spain.}
\affil[2]{Departamento de Filosof\'{\i}a, L\'ogica y Filosof\'{\i}a de la Ciencia, Universidad de Sevilla, Seville, Spain.}
\affil[3]{Unit of Computational Medicine, SciLifeLab, Department of Medicine Solna, Center for Molecular Medicine, Karolinska Institute, Stockholm, Sweden.}
\affil[4]{Department of Computer Science, University of Oxford, UK.}
\affil[5]{Algorithmic Nature Group, LABORES, Paris, France.}
\date{}
\begin{document}

\maketitle

\begin{abstract}
In this paper we look at small Turing machines (TMs) that work with just two colors (alphabet symbols) and either two or three states. For any particular such machine $\tau$ and any particular input $x$ we consider what we call the \emph{space-time} diagram which is basically the collection of consecutive tape configurations of the computation $\tau(x)$. In our setting it makes sense to define a fractal dimension for a Turing machine as the limiting fractal dimension for the corresponding space-time diagrams. It turns out that there is a very strong relation between the fractal dimension of a Turing machine of the above specified type and its runtime complexity. In particular, a TM with three states and two colors runs in at most linear time, if and only if, its dimension is 2, and its dimension is 1, if and only if, it runs in super-polynomial time and it uses polynomial space. If a TM runs in time $\bigO{(x^n)}$ we have empirically verified that the corresponding dimension is $\frac{n+1}{n}$, a result that we can only partially prove. We find the results presented here remarkable because they relate two completely different complexity measures: the geometrical fractal dimension on the one side versus the time complexity of a computation on the other side. 
\\
\medskip

\noindent \textbf{Keywords}: small Turing machines, Fractal complexity, Hausdorff dimension, Box dimension, space-time complexity, computational complexity.
\end{abstract}

\newpage

\setcounter{secnumdepth}{2}
\setcounter{tocdepth}{2}

%\tableofcontents

%\newpage

%% This creates a non-numbered section. You can use a larger font with
%% \Large or even \LARGE to distinguish from sections. 
\section*{{\LARGE Part I: Theoretical setting}}
%% Send to the table of contents. You can also magnify the font. 
\addcontentsline{toc}{section}{Part I}

In the first part of the paper, we shall define the basic notions we work with. In particular, we shall fix on a computational model: small Turing machines with a one-way infinite tape. For these machines, we will define so-called \emph{space-time} diagrams which are a representation of the memory state throughout time. For these diagrams we shall define a notion of fractal dimension. Next, some theoretical results are proven about this dimension.

\section{Complexity Measures}

Complexity measures are designed to capture complex behavior and quantify \emph{how} complex, according to that measure, that particular behavior is. It can be expected that different complexity measures from possibly entirely different fields are related to each other in a non-trivial fashion. This paper explores the relation between two rather different but widely studied concepts and measures of complexity. On the one hand, there is a geometrical framework in which the complexity of spatio-temporal objects is measured by their fractal dimension. On the other hand, there is the standard framework of computational (resources) complexity where the complexity of algorithms is measured by the amount of time and memory they take to be executed. 

The relation we have between both frameworks is as follows. We start in the framework of computations and algorithms and for simplicity assume that they can be modeled as using discrete time steps. Now, suppose we have some computer $\tau$ that performs a certain task $\tau(x)$ on input $x$. We can assign a spatio-temporal object to the computation corresponding to $\tau(x)$ as follows. 

We look at the spatial representation $\sigma_0$ of the memory when $\tau$ starts on input $x$. Next we look at $\sigma_1$: the spatial representation of the memory after one step in the computation and so forth for $\sigma_2, \sigma_3, \ldots$. Then we `glue' these spatial objects together into one object $\Sigma(\tau, x)$ by putting each output in time next to the other: $\langle \sigma_0, \sigma_1, \sigma_2 \ldots \rangle$. Each $\sigma_i$ can be seen as a slice of $\Sigma(\tau, x)$ of the memory at one particular time $i$ in the computation. This is why we call $\Sigma(\tau, x)$ the space-time diagram of $\tau(x)$. It is of these spatio-temporal objects and in particular the limit for $x$ going to infinity that we can sometimes compute or estimate the fractal dimension $d(\tau)$.

One can set this up in such a way that $d(\tau)$ becomes a well defined quantity. Thus, we have a translation from the computational framework to the geometrical framework. Next, one can then investigate the relation between these two frameworks, and in particular, if complex algorithms (in terms of time and space complexity) get translated to complex (in the sense of fractal dimension) space-time diagrams.

It is this main question that is being investigated in this paper. The computational model that we choose is that of Turing machines. In particular we look at small one-way infinite Turing machines (TMs) with just two or three states and a binary tape alphabet. 

For these particular machines we define a notion of dimension along the lines sketched above. In exhaustive computer experiments we compute the dimensions of all machines with at most three states. Among the various relations that we uncover is that such a TM runs in at most linear time iff  the corresponding dimension is 2. Likewise, if a TM (in general) runs in super-polynomial time and uses polynomial space, we see that the corresponding dimension is 1. 

Admittedly, the way in which fractal geometry measures complexity is not entirely clear and one could even sustain the view that fractal geometry measures something entirely else. Nonetheless, dimension is clearly related to degrees of freedom and as such related to an amount of information storage.

In~\cite{zenilca} space-time diagrams of Turing machines and
one-dimensional cellular automata were investigated in the context of algorithmic information theory. Notably an
uncompressibility test on the space-time diagrams led to a classification of the behaviour of CAs and TMs thereby identifying non-trivial behaviour~\cite{zenilchaos}. The same type of space-time diagrams were also investigated in
connection to two other seminal measures of complexity~\cite{Solomonoff:1964:FormalTheoryInductiveInference, ZvonkinLevin:1970:ComplexityFiniteObjects, Bennett88logicaldepth} connected to Kolmogorov complexity, namely Solomonoff's algorithmic probability~\cite{zenilchaos, kolmo2d} and Bennett's logical depth~\cite{chaitinbb, zeniluniversal}. Interesting connections between fractal dimension and spatio-temporal parameters have also been explored in the past~\cite{krakover,vicsek,chen}, delivering a range of applications in landscape analysis and even medicine in the study of time series.

The results presented in this paper were found by computer experiments and proven in part. To the best of our knowledge it is the first time that a relation  is studied between computational complexity and fractal geometry, of a nature as presented here.\\

{\bf Outline:} The current paper naturally falls apart into three parts. In the first part (Sections \ref{section:SmallTMdataBase}---\ref{section:SpaceTimeTheorem}) we define the ideas and concepts and prove various theoretical results. In the second part, Sections \ref{section:TheExperiment}---\ref{section:MostSalientFindings}, we describe our experiment and its results to investigate those cases where non of our theoretical results would apply. Finally in the third part, we present a literature study where we mention various results that link fractal dimension to other complexity notions.

More in detail:
In Section \ref{section:SmallTMdataBase} we describe the kind of TMs we shall work with. This paper can be seen as part of a larger project where the authors mine and study the space of small TMs. As such, various previous results and data could be re-used in this paper and in Section \ref{section:SmallTMdataBase} we give an adequate description of these used data and results.

In Section \ref{section:FractalDimension} we revisit the box-counting dimension and define a suitable similar notion of fractal dimension $d(\tau)$ for TMs $\tau$. We prove that $d(\tau) =2$ in case $\tau$ runs in time at most linear in the size of the input. Next, in Section \ref{section:SpaceTimeTheorem} we prove an upper and a lower bound for the dimension of Turing machines. The Upper Bound Conjecture is formulated to the effect that the proven upper bound is actually always attained. For special cases this can be proved. Moreover, under some additional assumptions this can also be proven in general. In our experiment we test if in our test-space the sufficient additional assumptions were also necessary ones and they turn out to be so.

Section \ref{section:TheExperiment} describes how we performed the experiment, what difficulties we encountered, how they were overcome, and also some preliminary findings are given. The main findings are presented in Section \ref{section:MostSalientFindings}.

We conclude the paper with Section \ref{section:LiteratureSurvey} where we present various results from the literature that link different notions of complexity to put our results within this panorama.

\section{The space of small Turing machines}\label{section:SmallTMdataBase}

As mentioned before, this paper forms part of a larger project where the authors exhaustively mine and investigate a set of small Turing machines. In this section, we will briefly describe the raw data that was used for the experiments in this paper and refer for details to the relevant sources.

\subsection{The model}

A TM can be conceived both as a computational device and as a dynamical system. In our studies a TM is represented by a \emph{head} moving over a \emph{tape} consisting of discrete \emph{tape-cells} where the tape extends infinitely in one direction. In our pictures and diagrams we will mostly depict the tape as extending infinitely to the left. Each tape cell can contain a symbol from an \emph{alphabet}. Instead of symbols we speak of \emph{colors} and in the current paper we shall work with just two colors: black and white.

The head of a TM can be in various \emph{states} as it moves over the cells of the tape. We shall refer to the collection of TMs that use $n$ states and $k$ symbols/colors as the $(n,k)$-space of TMs. We shall always enumerate the states from $1$ to $n$ and the colors from $0$ to $k-1$. In this paper we work with just two symbols so that we represent a cell containing a 0 with a white cell, and a cell containing a 1 with a black cell.

A computation of a TM proceeds in discrete time-steps. The tape content at the start of the computation is called the \emph{input}. By definition, our TMs will always start with the head at the position of the first tape cell, that is, the tape cell next to the edge of the tape; In our pictures this is normally the right-most tape. Moreover, by definition, our TMs will always commence their computation in the default start state $1$.

A TM $\tau$ in $(n,k)$ space is completely specified by its \emph{transition table}. 
This table tells what \emph{action} the head should perform when it is in State $1\leq j \leq n$ at some tape cell $c$ and reads there some symbol $0\leq i < k$. Such an action in turn consists of three aspects: changing to some some state (possibly the same one); the head moving either one cell left or one cell right, but never staying still; writing some symbol at $c$ (possibly the same symbol as was there before). Consequently, each $(n,k)$-space consists of $(2\cdot n\cdot k)^{n\cdot k}$ many different TMs. We number these machines according to Wolfram's enumeration scheme (\cite{wolfram02}, \cite{JoostDemo}) which is similar to the lexicographical enumeration.

Clearly, each TM in $(n,k)$ space is also present in $(m,k)$ space for $m\geq n$, by just not using the extra states since they are `inaccessible' from State 1. Many rules in a $(n,k)$ space are trivially equivalent in the computational sense up to a simple transformation of the underlying geometry, for example, by relabeling states by reflection or complementation, hence for all purposes identical. In the literature machines that have equivalents are sometimes called \emph{amphichiral}, we will sometimes refer to them as machine \emph{twins}.

We say that a TM \emph{halts} when the head ``falls off" the tape on the right-hand side; in other words, when the head is at the right-most position and receives an instruction to move right. The tape configuration upon termination of a computation is called the \emph{output}.

We shall refer to the input consisting of the first $m$ tape cells by black on an otherwise white tape as the input $m$. (This is in slight discrepancy with the convention in \cite{JoostenSZ11}.) In this context, a \emph{function} is a map sending an input $m$ to some output tape configuration. We call the function where the output is always identical to the input the \emph{tape identity} function.

By Rice's Theorem it is in principle undecidable if two TMs compute
the same function. Nonetheless, for spaces $(n,2)$ with $n$ small, no
universal computation is yet present (\cite{NWoods,Margenstern}). In \cite{JoostenSZ11} the authors completely classify the TMs in (3,2) space among the functions they compute, taking pragmatic approaches that possibly produce small errors to deal with undecidability and unfeasibility issues.

\subsection{Space-time diagrams}

As mentioned in the introduction, in this paper a central role is played by so-called \emph{space-time diagrams}. A space-time diagram for some computation is nothing more but the joint collection of consecutive memory configurations. We have included a picture of space-time diagrams for a particular TM for inputs 1 to 14  in Figure \ref{figure:TM346SpaceTimeDiagrams}.

\begin{figure}[htb!]
  \centering
  \includegraphics[height=7cm]{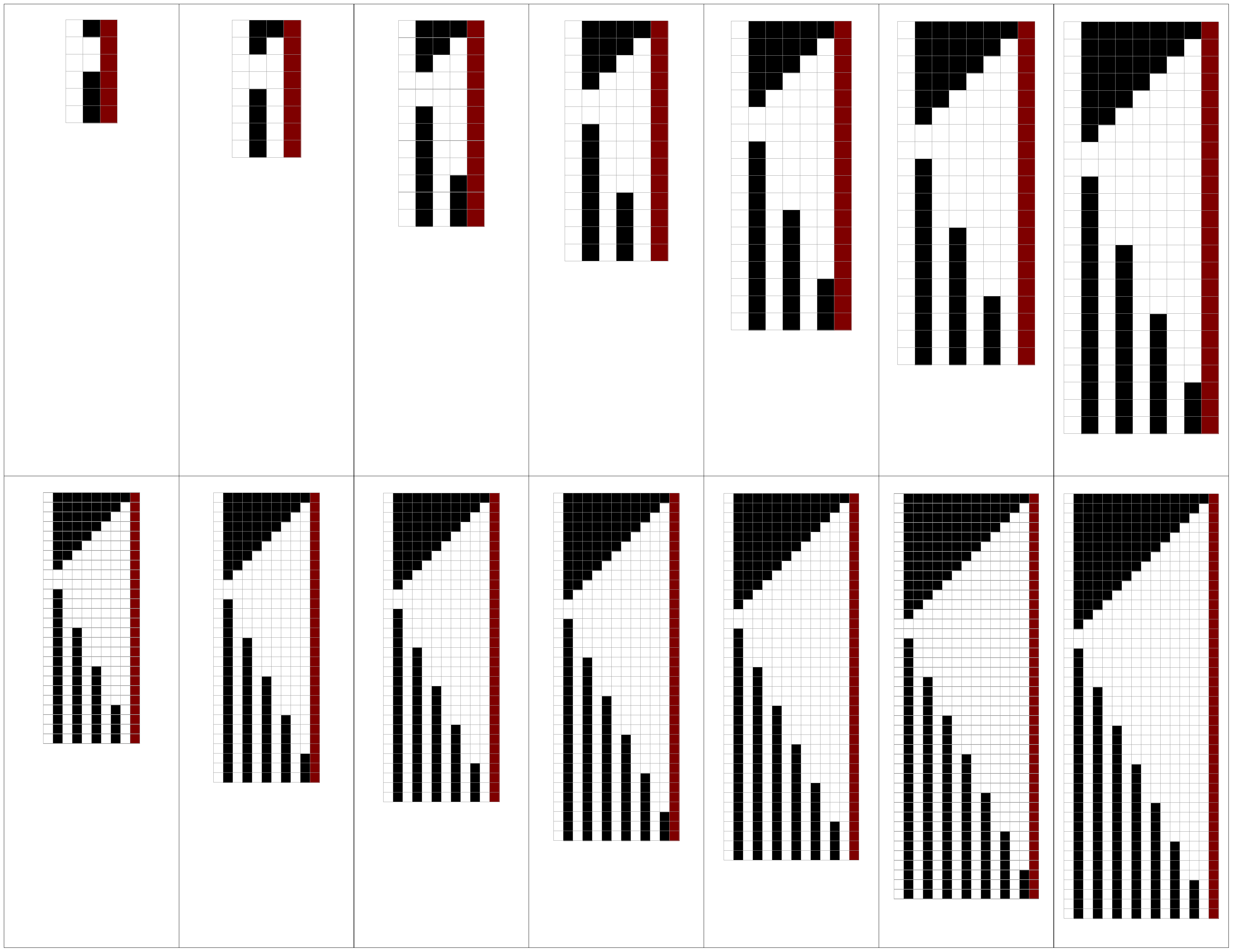}
  \caption{The figure shows a sequence of space-time diagrams corresponding to the TM in (2,2) space with number 346 (according to Wolfram's enumeration scheme \cite{wolfram02, JoostDemo} for (2,2) space) on inputs 1 up to 14.}
  \label{figure:TM346SpaceTimeDiagrams}
\end{figure}

Since these space time diagrams are such a central notion to this paper, let us briefly comment on Figure \ref{figure:TM346SpaceTimeDiagrams}. The top-row of each of these fourteen diagrams always represents the input tape configuration of the TM. We have chosen to depict the space-time diagrams of our TM on inputs 1 to 14. The rightmost cell in the diagram is actually not representing a tape cell. Rather it represents the end of the tape so that we depict it with a different color/grey-tone.

Remember that the computation starts with the head of the TM in State 1 in the rightmost cell. Each lower row represents the tape configuration of a next step in the computation. So, there can at most be one cell of different color between two adjacent rows in a space-time diagram. We see that this particular (2,2) TM with number 346 first moves over the tape input erasing it. Then it gradually moves back to the edge of the tape writing alternatingly black and white cells to eventually fall of the tape, whence it terminates.

Clearly these space-time diagrams define spatio-temporal objects by focussing on the black cells. It is of these spatio-temporal objects that we wish to measure the geometrical complexity. Subsequently, we wish to see if there is a relation between this geometrical complexity and the computational complexity (space or time usage) of the TM in question.

In the next section we shall see how to assign a measure of geometrical complexity to these space-time diagrams and call this measure the \emph{dimension} of the TM. Various relations between computational complexity of a TM on the one hand and its dimension on the other hand can be proven. Other relations shall be investigated via experiments.

\subsection{On our coding convention}

Note that for this paper it is entirely irrelevant how to numerically interpret the output tape configuration whence we shall refrain from giving such an interpretation. However, it has been a restrictive choice to represent our input in a unary way. That is to say, the notion of a function in our context only looks at a very restricted class of possible inputs: blocks of $n$ consecutive black cells for $n>0$. The main reason why we do this is that if we do not do this, our functions all behave in a very awkward and highly undesirable way. In \cite{JoostenSZ11} this undesirable behavior is explained in the so-called \emph{Strips Theorem}.

Basically, the Strips Theorem boils down to the following. Let us consider a TM $\tau$ in $(n,2)$ space on input $x$ and suppose $\tau (x)$ is a terminating computation. If we number the cells on the tape by their distance to the edge, let $i$ be largest cell-number that is visited in the computation of $\tau(x)$. Clearly, any tape input that is equal to $x$ on the first $i$ cells but possibly different on the cells beyond $i$, will perform exactly the same computation and in this sense in is input-independent.

We have chosen our input-output convention in such a way to prevent the Strips-Theorem. There are two undesirable side effects of our coding.
Firstly, it is clear that any TM that runs in less than linear time, actually runs in constant time. Secondly, the thus defined functions are very fast growing if we were to represent the output in binary. In particular, the tape identity represents an exponentially fast growing numerical function in this way.

A positive feature of our input convention is that the amount of symmetry present in the input coding facilitates various types of analysis and in particular automated function-completion seems to run more smoothly. {\bf Warning}: A clear drawback of our convention is that one tends to think of the $n$th input as the number $n$. In the context of this paper, this would not be good practice since it would, for example yield a linear primality test, factorization algorithm, etc.

We shall here describe an alternative way of representing the input $a$ and denote the representation by $\rho(a)$. This representation $\rho(a)$ shall be such that it avoids the Strips Theorem yet does not intrinsically entail exponential growth of the tape identity and similar functions in case we would interpret our output configuration in binary. Although we do not use, nor explore the alternative input coding we find it worth mentioning here and hope that future investigations can take up the new coding.

In order to represent the input $a$ according to $\rho$, we first write the input $a$ in binary as $\sum_{n=0}^{\infty}a_n2^n$ with all but finitely many $a_n = 0$. Let us denote the cells on the tape by $c_0, c_1,c_2, \ldots$. Here $c_0$ is the cell at the edge, $c_1$ the cell immediately next to it, etc. For $a\neq 0$, let $k = \lfloor \log_2 (a)\rfloor +1$ and $k=1$ otherwise. That is, $k+1$ is the number of digits in the binary expansion of $a$.

For each $0\leq i \leq k$ we shall represent $a_i$ in $c_{2\cdot i}$ in the canonical way: we set $c_{2\cdot i}$ to be one/black whenever $a_i =1$ and we set $c_{2\cdot i}$ to be zero/white otherwise. Moreover, we set all odd-labeled cells $c_{2\cdot i +1}$ to be zero with the sole exception at cell $2k+1$ that we define to be one/black.

It is clear that $\rho$ avoids the Strips Theorem. Moreover, if we were to interpret the output as binary, the tape identity defines a function whose growth rate is only in the order of $x \mapsto x^2$.

\section{Fractal dimensions}\label{section:FractalDimension}

In this section we shall briefly recall the definition of and ideas behind the box-counting dimension which is a particular fractal dimension having the better computational properties whence better suited for applications. 
In Section \ref{section:LiteratureSurvey} we relate the box-dimension to various other notions of fractal dimension and in particular to the well-known Hausdorff dimension.

After revisiting the notion of box-counting dimension, we see how to apply these ideto Turing machines and their space-time diagrams.

\subsection{Box dimension}
We shall use the notion of Box dimension. This notion of fractal dimension can be seen as a simplification of the well-known Hausdorff dimension (see \cite{hausdorff} and our survey section, Section \ref{section:LiteratureSurvey}). The Hausdorff dimension is mathematically speaking more robust than the Box dimension. However, the Box-dimension is easier to compute and is known to coincide with the Hausdorff dimension in various situations.

Let us briefly recall the definition of the Box dimension and the main ideas behind it. The intuition is as follows. Suppose we have a mathematical object $S$ of bounded size whose ``volume" $V(S)$ we wish to estimate. For example, let us work with a space $\mathbb R^n$ that has dimension $n$ large enough to embed our object $S$. The idea now is to cover the object $S$ by boxes in $\mathbb R^n$ and estimate the ``volume'' $V(S)$ of $S$ as function of the total number of boxes $N(S)$ needed to cover $S$. Clearly, the number of boxes $N(S)$ needed to cover $S$ depends on the size of the boxes used. Therefore, in the analysis we will take along the parameter $r$ which denotes the length of the edge of the boxes used, and we will write the number of boxes needed to cover $S$ as $N(S,r)$.

If $S$ is a line, which is a one-dimensional object, the corresponding notion of ``volume'' $V(S)$ is just the length of the line segment. To estimate the length $V(S)$ we clearly have 
\[
V(S) = \lim_{r\downarrow 0} \ \ \ r N(S,r) 
\]
if this is well defined.

If $S$ is a plane, or more in general a two-dimensional manifold, the corresponding notion of ``volume'' $V(S)$ is just the surface of the plane/manifold segment for which we have
\[
V(S) = \lim_{r\downarrow 0} \ \ \ r^2 N(S,r) 
\]
if well defined.

Likewise, for a three-dimensional object, to estimate its volume we would have 
\[
V(S) = \lim_{r\downarrow 0} \ \ \ r^3 N(S,r) 
\]
if well defined, and in general for an $d$-dimensional object we would obtain
\begin{equation}\label{equation:VolumeDimensionAndNumberOfBoxes}
V(S) = \lim_{r\downarrow 0} \ \ \ r^d N(S,r) .
\end{equation}

The idea behind the definition of the Box dimension is to take \eqref{equation:VolumeDimensionAndNumberOfBoxes} as a \emph{defining} equation of dimension if this makes sense mathematically speaking. Thus, solving for $d$ in \eqref{equation:VolumeDimensionAndNumberOfBoxes} we obtain:
\[
\begin{array}{rll}
\ &V(S)  =  \lim_{r\downarrow 0} \ \ \ r^d N(S,r) & \Longrightarrow \\
\ &\log \big( V(S) \big)  =  \lim_{r\downarrow 0} \ \ \ \Big(d \log(r) + \log(N(S,r))\Big) & \Longrightarrow \\
\ &\lim_{r\downarrow 0} \ \ \ \Big(d \log(r) + \log(N(S,r)) -\log \big( V(S) \big)\Big) =0 & \Longrightarrow \\
\ &\lim_{r\downarrow 0} \ \ \ \Big(d + \frac{\log(N(S,r))}{\log(r)} -\frac{\log \big( V(S) \big)}{\log(r)}\Big) =0 & \Longrightarrow \\
d&= \lim_{r\downarrow 0} \ \ \ \Big(-\frac{\log(N(S,r))}{\log(r)} +\frac{\log \big( V(S) \big)}{\log(r)}\Big) &  \\
\ &= \lim_{r\downarrow 0} \ \ \ \Big(\frac{\log(N(S,r))}{\log(\frac{1}{r})} +\frac{\log \big( V(S) \big)}{\log(r)}\Big) &  \\
\ &= \lim_{r\downarrow 0} \ \ \ \frac{\log(N(S,r))}{\log(\frac{1}{r})}. &  \\
\end{array}
\]
The last equality is justified in case $S$ is bounded as by assumption and $V(S)$ is finite so that 
$\lim_{r\downarrow 0} \  \frac{V(S)}{\log(r)} = 0$.

The above reflections form the main ideas behind the definition of Box dimension that we shall use in this paper.

\begin{definition}[Box dimension]\label{definition:BoxDimension}
Let $S$ be some spatio-temporal object that can be embedded in some $\mathbb{R}^n$, let $N(S,r)$ denote the minimal number of boxes of size $r$ needed to fully cover $S$. The Box dimension of $S$ is denoted by $\delta (S)$ and is defined by
\[
\delta (S) \ \ := \ \  \lim_{r\downarrow 0} \ \ \ \frac{\log(N(S,r))}{\log(\frac{1}{r})} 
\] 
in case this limit is well defined. In all other cases we shall say that $\delta (S)$ is undefined.
\end{definition}

\subsection{Box dimension for Space-Time diagrams}

Let us see how we can adapt the notion of Box dimension to our space-time diagrams. The spatio-temporal figure $S$ that we wish to measure will be defined by the black cells in the space-time diagram. Clearly, for each particular input on which the TM halts the corresponding space-time diagram is finite and has dimension $d(S) = 2$ : each black cell defines a piece of surface. It gets interesting when we consider limiting behavior of the TM on larger and larger inputs.

\subsubsection{A first attempt}
Let $\tau$ be some TM and let $S(\tau ,x)$ denote the space-time diagram corresponding to TM $\tau$ on input $x$ if this is well-defined, that is, if $\tau$ eventually halts on input $x$ which we shall denote by $\tau (x)\downarrow$. The question is, what is a sensible way to define the dimension $d(\tau)$ of our TM $\tau$?  It does not make much sense to define $d(\tau) = \lim_{x\to \infty} \delta (S(\tau,x))$ for a couple of reasons.

Firstly, $\tau$ might diverge on various inputs. We can easily bypass that by tacitly understanding $\lim_{x\to \infty}$ as $x$ getting larger and larger among those $x$ for which $\tau(x)\downarrow$ demanding that there are infinitely many such $x$. In case there are just finitely many $x$ on which $\tau$ converges, we could say that $d(\tau)$ is undefined.

The second objection is more serious. As for each $x$ with $\tau(x)\downarrow$ we have that $\delta(S(\tau,x))=2$ we see that all limits converge to the value 2 if they are well defined. This of course is highly undesirable. We can overcome this objection by scaling the length of each $S(\tau, x)$ to some figure ${\sf scale} (S(\tau,x))$ whose length has unit size. Thus, the black areas in ${\sf scale} (S(\tau,x))$ become more and more fine-grained so that it seems to make sense to define $d(\tau) = \delta \Big( \lim_{x\to \infty} {\sf scale} (S(\tau,x)) \Big)$.

\subsubsection{A second attempt and formal definition}

The new candidate $d(\tau) = \delta \Big( \lim_{x\to \infty} {\sf scale} (S(\tau,x)) \Big)$ has many good properties. However, for this new candidate  we again see two main objections.

The first objection is that $\lim_{x\to \infty} {\sf scale} (S(\tau,x))$ need not exist at all and stronger still, is likely not to be well defined in most cases. We could try to remedy this by working with with subsequences for which the limit is defined but it all seems very hairy.

The second objection is that this new definition seems hard to numerically approximate at first glance. We shall see how to overcome the second objection which will yield us automatically a solution to the first objection.

As we mentioned before, we cannot first approximate $\lim_{x\to \infty} {\sf scale} (S(\tau,x))$ and then compute the corresponding $\delta$ as this would always yield the answer 2. However, what we \emph{can} do is simultaneously approximate both $\delta$ and $\lim_{x\to \infty} {\sf scale} (S(\tau,x))$. There is a lot of choice in how we approximate and in how fast we approximate $\delta$ and how fast we approximate $\lim_{x\to \infty} {\sf scale} (S(\tau,x))$ relatively to the approximation of $\delta$.

There seems to be a canonical choice though. The approximation of the dimension $\delta$ is dependent on the size $r$ of the boxes. It seems very natural to take the size of our boxes to be exactly the size of one tape-cell. The size of one tape-cell is naturally determined by ${\sf scale} (S(\tau,x))$. Let us determine $r$ as dictated by ${\sf scale} (S(\tau,x))$. In order to facilitate our discussion we first fix some notation.

\begin{definition}
For $\tau$ a TM and $x$ an input so that $\tau(x) {\downarrow}$, we denote by $t(\tau,x)$ the amount of time-steps $\tau$ needed on input to terminate. Likewise, $s(\tau,x)$ denotes the amount of space-cells used by the computation of $\tau$ on input $x$. Thus, $s(\tau,x)$ measures the distance between the edge of the tape and the furthest tape cell visited by the head during the computation.

We shall sometimes write $t_\tau(x)$ or even just $t(x)$ if the context allows us to, and similarly for the space-usage function $s(\tau, x)$.
\end{definition}

By the nature of our input-output protocol, there exist no TMs whose runtime is sub-linear but not constant.  Let us first concentrate on the TMs that run in at least linear time and deal with the constant time TMs later. If a TM halts in non-constant time, the least it should do is read all the input, do some calculations and then go back to the beginning of the tape. Thus, clearly $t(\tau,x) > s(\tau, x)$, whence the scaling of the figure $S(\tau, x)$ is best done by resizing the runtime to get length 1. Consequently, the size of $r$ scales to $r= \frac{1}{t(\tau,x)}$.

Recall that $N(\Sigma,r)$ denotes the minimal number of boxes of size $r$ needed to cover the spatio-temporal object $\Sigma$. Now that we have determined the size of $r$, we can write $N(\tau, x)$ instead of $N(S(\tau, x),\frac{1}{t(\tau,x)})$ and it is clear that $N(\tau, x)$ is just the number of black cells in the space time diagram of $\tau$ on input $x$. Thus, the second attempt of defining $d(\tau)$ then translates to

\begin{equation}\label{equation:dimensionTMSecondAttempt}
d(\tau) := \lim_{x\to \infty} \frac{\log \big( N(\tau, x) \big)}{\log \big( t(\tau, x) \big)}.
\end{equation}

In this definition we could address the issue of undefinedness by replacing $\lim$ by $\limsup$ or $\liminf$. Notwithstanding the theoretical correctness of this move, it seems hardly possible to sensibly compute the $\limsup$ or $\liminf$ in the general setting.

In the current paper however, we have only considered TMs with either two or three states and just two colors. It turned out that in this setting we could determine both $\liminf$ and $\limsup$. In all cases that we witnessed where the limit \eqref{equation:dimensionTMSecondAttempt} outright was not well-defined, we were able to identify different subsequences where the limit \eqref{equation:dimensionTMSecondAttempt} did converge so that we could choose to either go with the $\limsup$ or with the $\liminf$. It turns out that in general, the lower the dimension the more interesting the corresponding TM so that we decided to work with $\liminf$.

For TMs with constant runtime, we know that only a constant number of cells will be visited and possibly changed color. For these TMs the figure $S(\tau, x)$ can only be sensibly scaled by using the input size. By doing so, we see that in the limit we just get a black line whose dimension is clearly equal to one. However, as we consider constant runtime TMs as a degenerate case so to say, we shall for convenience define the dimension of such machines to be equal to 2. We do so in order to have them more like linear time TMs (see Lemma \ref{theorem:LinearTimeTMsHaveDimension2}). All these considerations and reflections lead us to the following definition.

\begin{definition}[Box dimension of a Turing machine]
Let $\tau$ be a TM that converges on infinitely many input values $x$. In case $\tau(x)\downarrow$, let $N(\tau, x)$ denote the number of black cells in the space-time diagram of $\tau$ on input $x$ and let $t(\tau, x)$ denote the number of steps needed for $\tau$ to halt on $x$.

We will define the Box dimension of a TM $\tau$ and denote it by $d(\tau)$. In case $t(\tau,x)$ is constant from some $x$ onwards, we define $d(\tau) \ := \ 2$. Otherwise, we define
\[
d(\tau) := \liminf_{x\to \infty} \frac{\log \big( N(\tau, x) \big)}{\log \big( t(\tau, x) \big)} .
\]
\end{definition}

Note that our definition of dimension can readily be generalized to non-terminating computations. Also, restricting to computational models with discrete time steps is not strictly necessary.

\subsection{Linear time Turing machines}

For certain TMs $\tau$, we can actually compute their Box dimension. Let us reconsider TM 346 again whose space-time diagrams were displayed in Figure~\ref{figure:TM346SpaceTimeDiagrams}. Due to the extreme regularity in the space-time diagrams we can see that TM 346 runs in linear time. That is to say, linear in the length of the representation of the input.

Thus, after scaling each space-time diagram so that the vertical time-axis is rescaled to 1, we will always have a little surface in the shape of a black triangle in the scaled space-time diagram. The Box dimension of a triangle is of course 2. We may conclude that $d(2,2\text{-TM } 346)=2$. Of course the only important feature used here is the linear-time performance of 2,2-TM 346. We can summarize this observation in a lemma.

\begin{lemma}\label{theorem:LinearTimeTMsHaveDimension2}
Let $\tau$ be a TM that runs at most linear time. Then, $d(\tau) = 2$.
\end{lemma}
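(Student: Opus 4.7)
The plan is to split on the two clauses of the definition of $d(\tau)$. If $t(\tau, x)$ is eventually constant, then $d(\tau) = 2$ by fiat and there is nothing to prove. So assume $t(\tau, \cdot)$ is not eventually constant; combining the at-most-linear hypothesis with the paper's earlier observation that sub-linear non-constant runtime is impossible in this coding, one concludes $t(\tau, x) = \Theta(x)$ for all sufficiently large $x$ on which $\tau$ converges. This two-sided linear bound is what the rest of the argument rests on.

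The strategy is then to show $\liminf_{x\to\infty} \log N(\tau, x)/\log t(\tau, x) = 2$ by sandwiching the ratio between two quantities that both tend to $2$. For the upper bound, at each time step the tape carries at most $\max(s(\tau, x), x)$ non-white cells, since every cell past both the furthest visited position and the input is an untouched white cell; summing over the $t$ rows of the space-time diagram gives $N(\tau, x) \leq t \cdot \max(s, x)$. The bounds $s \leq t$ together with $x = O(t)$, which follows from $t = \Theta(x)$, then yield $N = O(t^2)$ and hence $\log N / \log t \leq 2 + o(1)$.

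For the lower bound, I would exploit that the head writes at most one symbol per step, so the number of black cells on the tape can decrease by at most one per step. Starting from the $x$ initial black cells of the unary input, at time $k$ at least $\max(0, x-k)$ cells are still black, so
\[
N(\tau, x) \;\geq\; \sum_{k=0}^{\min(t,x)-1} (x - k) \;=\; \Omega\bigl(\min(t, x)^2\bigr).
\]
The relation $t = \Theta(x)$ forces $\min(t, x) = \Theta(t)$, hence $N = \Omega(t^2)$, and so $\log N / \log t \geq 2 - o(1)$.

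Combining the two bounds yields the claimed value of the $\liminf$. The main obstacle is conceptual rather than technical: one must notice that ``at most linear plus not eventually constant'' forces the two-sided asymptotic $t = \Theta(x)$, because without the matching lower bound on $t$ there is no way to convert the combinatorial estimate $N = \Omega(x^2)$, which is produced by the untouched unary input, into the dimensional statement $N = \Omega(t^2)$ that is actually needed in the denominator of the $\liminf$.
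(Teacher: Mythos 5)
Your proof is correct and follows essentially the same route as the paper's: both rest on the observation that a non-constant, at-most-linear-time machine must read its whole input (else it becomes constant-time from some input onward), so $t(\tau,x)=\Theta(x)$, and that the unary input can be erased at most one cell per step, leaving a triangular region of $\Omega(x^2)=\Omega(t^2)$ black cells in the space-time diagram. The paper phrases this geometrically (after rescaling, the diagram always contains a black triangle of area at least $\frac{1}{2a^2}$, which has box dimension $2$), whereas you carry out the equivalent count directly inside the $\liminf \log N/\log t$ formula and state the matching upper bound $N=O(t\cdot\max(s,x))=O(t^2)$ explicitly; this is a slightly more formal rendering of the same argument.
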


\begin{proof}
We fix some TM $\tau$ that runs at most linear time. Our input/output convention is such that $\tau$ is either constant time or $\tau$ is linear time. In case $\tau$ runs in constant time, we have that $d(\tau)=2$ by definition.

Let us consider the case that $\tau$ runs in linear time. It must be the case that the head goes all the way to the end of the tape input, if not, $\tau$ would run in constant time from some input $y$ onwards. Input $x$ is represented by $x+1$ consecutive black cells. In the worst case (the fewest amount of black cells), at all the first steps the input is erased and replaced by a white cell as is the case in Figure \ref{figure:TM346SpaceTimeDiagrams}.

However, $\tau$ runs in linear time, say, for any $x$, the machine $\tau$ runs at most $a\cdot(x+1)$ many steps with $2 \leq a < \infty$. After scaling, the input will get size $\frac{x+1}{a\cdot (x+1)} = \frac{1}{a}$. Thus in the worst case, the upper triangle has size $\frac{1}{2a^2}$ which is independent on $x$ whence non-vanishing. Clearly, the Box dimension of a triangle of whatever size equals 2. Thus $d(\tau) = 2$ as was what we wanted to see. 
\end{proof}

\section{The Space-Time Theorem and applications}\label{section:SpaceTimeTheorem}

Above we saw that for linear time TMs we can actually compute the corresponding dimension. However, for non-linear TMs we can only prove an upper bound on the Box dimension.

\subsection{The Space-time Theorem: an upper bound}

\begin{theorem}[Space-time Theorem]\label{theorem:SpaceTimeTheorem}
Let us, for a given TM $\tau$, denote by  $s(x)$ the amount of cells visited by $\tau$ on input $x$, and let $t(x)$ denote the amount of computation steps it took $\tau$ to terminate on input $x$.
\[
\mbox{If $\liminf_{x \to  \infty} \  \frac{\log(s(x))}{\log(t(x))} = n$ then $d(\tau) \leq  1 +n$. }
\]
\end{theorem}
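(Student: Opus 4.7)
The plan is to bound $N(\tau,x)$ purely geometrically by the area of the rectangle circumscribing the space-time diagram and then push this bound through the $\liminf$. The space-time diagram of $\tau$ on input $x$ consists of $t(x)+1$ rows (the initial configuration and one row per computation step); cells at distance larger than $s(x)$ from the tape edge are never visited and hence remain white throughout. Therefore every black cell lies inside an $s(x) \times (t(x)+1)$ rectangle, yielding the trivial but crucial area bound
\[
N(\tau,x) \ \leq \ s(x)\cdot (t(x)+1).
\]

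First I would handle the non-degenerate case in which $\tau$ does not run in constant time, so that $t(x)\to\infty$ as $x$ ranges over the (infinitely many) inputs on which $\tau$ halts. Taking logarithms in the area bound and dividing by $\log t(x)$ gives
\[
\frac{\log N(\tau,x)}{\log t(x)} \ \leq \ \frac{\log s(x)}{\log t(x)} \ + \ \frac{\log (t(x)+1)}{\log t(x)}.
\]
Since $\frac{\log(t(x)+1)}{\log t(x)} = 1 + o(1)$ as $t(x)\to\infty$, I can rewrite the right-hand side as $\frac{\log s(x)}{\log t(x)} + 1 + \varepsilon_x$ with $\varepsilon_x\to 0$. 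Applying $\liminf$ to both sides, using the monotonicity property $a_x\leq b_x \Rightarrow \liminf a_x\leq \liminf b_x$, and the fact that adding a null sequence does not change the $\liminf$, I obtain
\[
d(\tau)\ =\ \liminf_{x\to\infty}\frac{\log N(\tau,x)}{\log t(x)}\ \leq\ 1+\liminf_{x\to\infty}\frac{\log s(x)}{\log t(x)}\ =\ 1+n,
\]
which is the desired conclusion.

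The degenerate case where $\tau$ halts in constant time must be treated separately because $d(\tau) = 2$ by fiat. In that case $s(x)$ is also bounded by the same constant and $t(x)$ is bounded above and below by constants $\geq 2$ (a machine halting in $0$ or $1$ step would be excluded by our input-output convention, since the head must read the input before falling off), so $\liminf \log s(x)/\log t(x)$ is well-defined; since each step visits at most one new cell we have $s(x)\leq t(x)+1$, and for constant $t(x) = T \geq 2$ one checks that the bound $n\geq 1$ holds, giving $1+n\geq 2 = d(\tau)$.

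The main obstacle is really just the careful bookkeeping of the $\liminf$ arithmetic—specifically, making sure that the $o(1)$ correction term coming from $\log(t(x)+1)/\log t(x)$ is absorbed correctly, and that the inequality $\liminf(a_x + 1 + \varepsilon_x) = 1 + \liminf a_x$ is justified via $\varepsilon_x\to 0$. Everything else is a one-line area argument; the content of the theorem is the observation that the space-time diagram is, at worst, a solid rectangle of area $s(x)\cdot t(x)$, which translates directly through the logarithmic definition of Box dimension into the additive bound $1 + n$.
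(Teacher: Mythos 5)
Your main argument is exactly the paper's proof: the area bound $N(\tau,x)\leq s(x)\cdot t(x)$ (the space-time diagram is at worst a solid $s\times t$ rectangle of black cells), followed by taking logarithms, dividing by $\log t(x)$, and passing to the $\liminf$. Your extra care with the $t(x)+1$ versus $t(x)$ and the $o(1)$ bookkeeping is fine but not needed; the paper simply writes $N\leq s\cdot t$ and chains the inequalities.

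The one place you go beyond the paper is the constant-time case, and there your argument does not work. You claim that $s(x)\leq t(x)+1$ lets one ``check that $n\geq 1$,'' but $n\geq 1$ would require $s(x)\geq t(x)$ eventually, which is the opposite inequality. A machine that bounces between the first two tape cells for a few steps before falling off has, say, $s(x)=2$ and $t(x)=5$ for all large $x$, giving $n=\log 2/\log 5<1$ and hence $1+n<2=d(\tau)$; so the stated inequality genuinely fails for such machines under the fiat definition $d(\tau)=2$. The paper avoids this by implicitly restricting to machines for which $d(\tau)$ is given by the $\liminf$ formula, i.e., non-constant runtime. So the correct resolution is not to prove the degenerate case but to exclude it; your treatment of the non-degenerate case, which is the actual content of the theorem, is correct and coincides with the paper's.
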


\begin{proof}
The box dimension is maximal in case all cells under consideration are black. This number is bounded above by $s \cdot  t$.
Plugging this in the definition of $d(\tau)$ gives us our result:
\[
\begin{array}{lll}
d(\tau) & = & \liminf_{x\to \infty} \frac{\log \big( N(\tau, x) \big)}{\log \big( t(\tau, x) \big)}\\
\ &\leq & \liminf_{x\to \infty} \frac{\log \big( s(x) \cdot t(x) \big)}{\log \big( t(x) \big)}\\
\ &\leq & \liminf_{x\to \infty} \frac{\log \big( s(x) \big) + \log \big( t(x) \big)}{\log \big( t(x) \big)}\\
\ &\leq &  1+ \liminf_{x\to \infty} \frac{\log \big(s(x) \big)}{\log \big( t(x) \big)}\\
\ &\leq & 1+n.\\
\end{array}
\]
\end{proof}

As we shall see, in all cases the upper bound given by the Space-time Theorem is actually attained in our experiment. It is unknown however, if it holds in general.

\subsection{A lower bound}

We first observe that for any Turing machine $\tau$ we have that $d(\tau)\geq 1$. The main idea is that if there are too many white cells, then the Turing machine would enter in a loop and either finish straight away or never finish\footnote{We would like to thank an anonymous referee for suggesting this simple argument to us.}.

\begin{theorem}\label{theorem:lowerBound}
The dimension $d(\tau)\geq 1$ for any Turing machine $\tau$.
\end{theorem}

\begin{proof}
Let $\sigma$ be the number of states of some fixed TM $\tau$. It is clear that if we have a sequence of $\sigma$ consecutive steps in a computation of $\tau$ where the tape is entirely white, then $\tau$ will enter in a loop. Thus for $\tau$, in order to exhibit non-trivial behavior, we should have --modulo an additive constant-- that
\begin{equation}\label{equation:simpleLowerBound}
\frac{N(\tau, x )}{t(\tau,x)} > \frac{1}{\sigma}.
\end{equation} 
For linear or constant time TMs $\tau'$ we had already observed in Lemma \ref{theorem:LinearTimeTMsHaveDimension2} that $d(\tau') \geq 1$ so we may assume that $\tau$ has super-linear runtime asymptotic behavior. But then, from \eqref{equation:simpleLowerBound} it follows that in the limit we have $\frac{\log(N(\tau, x ))}{\log(t(\tau,x))} \geq 1$ as wto be shown.
\end{proof}

The method in proving the lower bound seems very crude: no blocks of $\sigma$ consecutive entirely white tapes may occur. It seems that more ind-depth analysis could yield sharper lower bounds.

\subsection{The Asymptotic Conjectures}

In Theorem \ref{theorem:lowerBound} we proved $d(\tau)\geq 1$. However, we conjecture that something stronger actually holds.

\begin{conjecture}[Space-time ratio conjecture]
For each TM $\tau$ which runs in more than linear time, we have that $\displaystyle \lim_{x\to \infty} \frac{s(\tau,x)}{t(\tau,x)} =0$.
\end{conjecture}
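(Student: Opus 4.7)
I would proceed via the forward direction by contradiction: assume $\tau$ runs in more than linear time, picking a subsequence $y_k$ with $t(\tau, y_k)/y_k \to \infty$; the target is $s(\tau, y_k)/t(\tau, y_k) \to 0$. For contradiction, assume that after passing to a sub-subsequence (still denoted $x_k$) one has both $t(\tau, x_k)/x_k \to \infty$ and $s(\tau, x_k) \geq \alpha\, t(\tau, x_k)$ for some fixed $\alpha > 0$. Together these force $s(\tau, x_k) \to \infty$ as well.

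The main tool would be a crossing-sequence analysis. For each boundary $B_i$ between cells $i$ and $i+1$, let $c_i(x)$ count the head's crossings of $B_i$ during $\tau(x)$. Every step crosses exactly one boundary and the halting step crosses the right edge, so
\[
t(\tau, x) \;=\; 1 + \sum_{i=0}^{s(\tau, x)-1} c_i(x).
\]
The hypothesis then forces the average crossing count on $x_k$ to be at most $1/\alpha$, so for any fixed $K > 2/\alpha$ at least half of the boundaries have $c_i \leq K$ by a routine counting argument. Crossing sequences of length at most $K$ on an $n$-state TM lie in a finite set of size at most $(2n)^K$, so pigeonhole among the many short-crossing boundaries yields many pairs $B_i, B_j$ (with $i < j$) carrying identical crossing sequences.

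The classical crossing-sequence pumping lemma would then apply: when two boundaries of a halting computation share the same crossing sequence, the strip of tape between cells $i+1$ and $j$ can be excised (or duplicated) to produce another valid halting computation of $\tau$ with a controlled change in runtime. Iterating on the matched pairs inside $\tau(x_k)$ decomposes the computation into a bounded header, one or more repeating central blocks, and a bounded tail, making $t(\tau, x_k)$ essentially linear in the number of block repetitions. Combined with a separate bound $s(\tau, x_k) = \bigO(x_k)$ on the total visited region, this would give $t(\tau, x_k) = \bigO(x_k)$, contradicting $t(\tau, x_k)/x_k \to \infty$ and closing the argument.

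The main obstacle is precisely the bound $s(\tau, x_k) = \bigO(x_k)$. The pumping argument disciplines computation inside the input region $[0, x_k]$, but $\tau$ may also make excursions into the initially blank region beyond cell $x_k$. The first such excursion is well controlled: it depends only on the entering state, of which there are at most $n$, so its length is either bounded by a constant or infinite (the latter yielding non-halting). Subsequent excursions encounter an already-written extended region, and they form a finite-state transducer whose long-term behavior is what must be analyzed to rule out an accumulation of $\omega(x_k)$ extra cells across many re-entries. Making this rigorous in full generality is, we expect, the genuinely hard step — tractable case-by-case for the $(2,2)$ and $(3,2)$ spaces investigated here, but not obviously so in general, which is presumably why the authors leave the statement as a conjecture.
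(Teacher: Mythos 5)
First, a point of comparison that is really a non-comparison: the paper offers no proof of this statement. It is posed as a conjecture and supported only empirically (Finding 0 in Section 6.3, checked machine by machine in $(3,2)$ space), so your argument has to stand entirely on its own. As written it does not --- you say so yourself --- but I want to stress that the gap is not where you locate it, and that the machinery you have already assembled, pointed in a different direction, appears sufficient to close it. Under your contradiction hypothesis $s(\tau,x_k)\geq\alpha\,t(\tau,x_k)$ you get $t(\tau,x_k)\leq s(\tau,x_k)/\alpha$ for free, so the entire header/repeating-block/tail pumping decomposition of the input region contributes nothing: all that is needed is a bound on $s(\tau,x_k)$ in terms of $x_k$. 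And the unconditional bound $s(\tau,x_k)=\bigO(x_k)$ that you single out as ``the genuinely hard step'' is simply false: the paper itself exhibits $(3,2)$ machines with exponential space usage, e.g.\ $s_{683863}(x)=2\left(\frac{x+1}{2}+2^{\frac{x+1}{2}}-1\right)$ and the Busy Beaver, so no excursion analysis can deliver it in general. It can only be hoped for \emph{under} the hypothesis $s\geq\alpha t$, which your plan never exploits for this purpose.

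The way to get it under that hypothesis is your own crossing-sequence toolkit applied to the blank region rather than the input region. If two boundaries $B_i,B_j$ ($i<j$, both lying beyond the input block, both actually crossed) carry identical crossing sequences, gluing the part of the computation to the right of $B_i$ onto the part to the left of $B_j$ yields a valid halting computation whose initial tape is the \emph{same} input $x_k$ (the excised cells are initially blank) and which is strictly shorter --- contradicting determinism, since the computation on a fixed input is unique. Hence the blank-region boundaries have pairwise distinct crossing sequences, so at most $(2n)^{K+1}$ of them can have length $\leq K$. Your Markov estimate says at least half of all $s-1$ boundaries have $c_i\leq K$ for $K=\lceil 2/\alpha\rceil$; since all but $\bigO(x_k)$ of the boundaries lie in the blank region, this forces $s(\tau,x_k)\leq 2x_k+\bigO(1)$ and therefore $t(\tau,x_k)\leq s(\tau,x_k)/\alpha=\bigO(x_k)$, the desired contradiction --- no pumping, no iteration, no excursion bookkeeping. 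One residual issue you should make explicit: your reduction to a subsequence with $t(\tau,y_k)/y_k\to\infty$ quietly reads ``more than linear time'' as $t(x)/x\to\infty$. For alternators such as TM $1\,728\,529$ (linear time on even inputs, exponential on odd), the full limit $\lim_{x\to\infty}s/t$ is \emph{not} $0$ even though the machine is not $\bigO(x)$, so under the other natural reading the conjecture as literally stated would be false and the subsequence framing is essential, not cosmetic.
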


In certain cases the Space-Time Theorem (Theorem \ref{theorem:SpaceTimeTheorem}) and the lower bound as proved in Theorem \ref{theorem:lowerBound} coincide.

\begin{lemma}\label{theorem:CertainTMsHaveDimension1}
In case a TM $\tau$ uses polynomial space, and runs super-polynomial time we have that $d(\tau) = 1$.

More in general, if for a TM $\tau$ we have that $\lim_{x\to \infty} \frac{s(\tau, x)}{t(\tau,x)} = 0$, then
\[
\liminf_{x\to \infty} \frac{\log \big( s_\tau(x)\big)}{\log \big( t_\tau(x) \big)} = 0 \ \ \Longleftrightarrow \ \ d(\tau) =1.
\] 
\end{lemma}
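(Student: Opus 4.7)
The plan is to combine the lower bound $d(\tau) \geq 1$ supplied by Lemma~\ref{theorem:aLowerBoundOnTheTMDimension} (which is available because the hypothesis $\lim s/t = 0$ is in force) with the upper bound $d(\tau) \leq 1 + \liminf \log(s)/\log(t)$ supplied by the Space-time Theorem (Theorem~\ref{theorem:SpaceTimeTheorem}). Whenever the $\liminf$ on the right vanishes the two bounds match and pin $d(\tau)$ at $1$; this immediately handles both the headline assertion and the forward direction of the biconditional. The reverse direction is the delicate part and will require a genuinely new lower bound on $N(\tau,x)$.

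For the headline assertion I would first observe that polynomial space $s(x) = O(\mathrm{poly}(x))$ combined with super-polynomial runtime $t(x) = \omega(\mathrm{poly}(x))$ delivers two things simultaneously: polynomial divided by super-polynomial tends to zero, so $s/t \to 0$; and $\log s = O(\log x)$ whereas $\log t$ grows faster than every constant multiple of $\log x$, so $\log(s)/\log(t) \to 0$. Both hypotheses of the forward direction of the biconditional are therefore in force, and the sandwich yields $d(\tau) = 1$. The forward direction in general is the same sandwich: feed $\liminf \log(s)/\log(t) = 0$ into the Space-time Theorem to obtain $d(\tau) \leq 1$, and feed $s/t \to 0$ into Lemma~\ref{theorem:aLowerBoundOnTheTMDimension} to obtain $d(\tau) \geq 1$.

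For the reverse direction I would argue by contraposition: assume $\alpha := \liminf \log(s(x))/\log(t(x)) > 0$ and try to conclude $d(\tau) > 1$. Pick a subsequence $x_k$ along which $\log(s(x_k))/\log(t(x_k)) \to \alpha$; the aim would be to show $N(\tau,x_k) \geq s(x_k) \cdot t(x_k)^{1-o(1)}$, which on taking logs gives $\log N/\log t \geq 1 + \alpha - o(1)$ along the subsequence and hence forces $d(\tau) \geq 1 + \alpha > 1$. The main obstacle is that the pigeon-hole argument behind Lemma~\ref{theorem:aLowerBoundOnTheTMDimension} charges only one black cell per block of $n$ rows and is completely insensitive to $s$: obtaining a genuinely $s$-dependent lower bound appears to require a column-by-column accounting, for example via crossing sequences, that rules out trajectories in which the head races out to cell $s(x)$, erases everything on its way back, and then spends almost all of the $t(x)$ steps inside a constant-width region near the tape edge. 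Since such pathological trajectories can in principle be realised by a TM, the sharpened lower bound will probably require exploiting further structural features of the $(n,2)$-spaces with $n\leq 3$, and this is where I expect the main technical work to lie.
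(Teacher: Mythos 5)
Your headline argument and the forward implication coincide exactly with the paper's proof: the paper's entire proof is the single sandwich
\[
1 \ \le\ d(\tau) \ \le\ 1 + \liminf_{x\to\infty}\frac{\log\big(s_\tau(x)\big)}{\log\big(t_\tau(x)\big)} \ \le\ 1,
\]
obtained by combining Lemma~\ref{theorem:aLowerBoundOnTheTMDimension} (which needs $s/t\to 0$) with Theorem~\ref{theorem:SpaceTimeTheorem}, and your observation that polynomial space together with super-polynomial time forces both $s/t\to 0$ and $\log s/\log t\to 0$ is the (implicit in the paper) reduction of the headline claim to that chain.

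Where you go beyond the paper is in noticing that the sandwich proves only one direction of the stated biconditional. You are right: the chain above is valid only under the hypothesis $\liminf\log s/\log t=0$, so it establishes $\liminf=0 \Rightarrow d(\tau)=1$ but says nothing about the converse. The paper offers no argument for the reverse implication either; that implication is precisely the Upper Bound Conjecture (Conjecture~\ref{conjecture:UpperBoundConjecture}) specialized to the value $1$, which the paper states as open, verifies only empirically in $(3,2)$ space, and proves only under the extra hypothesis $\lim_{x\to\infty} N_\tau(x)/(s_\tau(x)t_\tau(x))\neq 0$ in Lemma~\ref{theorem:UpperBoundProof}. So your proposal is as complete as the paper's own proof, and your diagnosis --- that closing the reverse direction requires a lower bound on $N(\tau,x)$ that genuinely scales with $s_\tau(x)$, which the block-of-$n$-blank-rows pigeonhole argument cannot deliver --- is accurate; the crossing-sequence strategy you sketch is not carried out anywhere in the paper. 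If you wanted a fully proved statement you would either restrict the lemma to the forward implication or add the hypothesis of Lemma~\ref{theorem:UpperBoundProof}.
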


\begin{proof}
By combining our general lower and upper bound as proven in Theorem \ref{theorem:lowerBound} and Theorem \ref{theorem:SpaceTimeTheorem} respectively, we see that
\[
1\ \leq \  d(\tau) \ \leq 1 + \liminf_{x\to \infty} \frac{\log \big( s_\tau(x) \big) }{\log \big( t_\tau(x)\big) } \leq 1.
\]
\end{proof}

Lemma \ref{theorem:CertainTMsHaveDimension1} shows us that in certain cases, the upper bound as given in the Space-Time Theorem is actually attained. We shall empirically verify that this is always the case in (3,2) space and conjecture that it holds more in general.

\begin{conjecture}[Upper Bound Conjecture]\label{conjecture:UpperBoundConjecture}
We conjecture that for each $n \in \omega$ and each TM $\tau$ in $(n,2)$ space we have $d(\tau) = 1 + \displaystyle \liminf_{x\to \infty} \frac{\log \big( s_\tau (x) \big) }{\log \big( t_\tau (x) \big) }$.
\end{conjecture}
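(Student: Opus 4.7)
The Space-Time Theorem (Theorem~\ref{theorem:SpaceTimeTheorem}) already delivers the direction $d(\tau)\leq 1+\liminf_{x\to\infty}\frac{\log s_\tau(x)}{\log t_\tau(x)}$, so the substantive content of the conjecture is the matching lower bound. My plan is to reduce this to a density statement about black cells in the space-time diagram and then attack the density statement by sharpening the pigeonhole argument of Lemma~\ref{theorem:aLowerBoundOnTheTMDimension}.

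\emph{Step 1: Reduction.} It suffices to prove a bound of the form $N(\tau,x)\geq c\cdot s_\tau(x)\cdot t_\tau(x)$ for some constant $c=c(n)>0$ depending only on the number of states $n$. Indeed, taking logarithms and dividing by $\log t_\tau(x)$ gives
\[
\frac{\log N(\tau,x)}{\log t_\tau(x)} \;\geq\; \frac{\log s_\tau(x)}{\log t_\tau(x)} + 1 + \frac{\log c}{\log t_\tau(x)},
\]
and since $t_\tau(x)\to\infty$ the last term vanishes, so passing to $\liminf$ yields exactly the desired inequality.

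\emph{Step 2: Local density at each cell.} For each cell position $p$ with $1\leq p\leq s_\tau(x)$, introduce its \emph{visit log}: the finite sequence of tuples (incoming state, colour read, colour written, outgoing state) recorded on each successive visit by the head at $p$. The classical crossing-sequence technique of Hennie tells us that if this log is long then the computation admits cut-and-paste surgery unless the cell colour changes sufficiently often. Combined with the termination requirement that the head must eventually drop off the right edge, this should force cell $p$ to remain black during a positive fraction of the time intervals between its visits.

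\emph{Step 3: Aggregation.} Summing the per-cell bound from Step~2 over $p=1,\ldots,s_\tau(x)$ and using that the visit counts at all cells sum to $t_\tau(x)$, one obtains $N(\tau,x)\geq c(n)\cdot s_\tau(x)\cdot t_\tau(x)$, and Step~1 then concludes.

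The main obstacle is Step~2. The proof of Lemma~\ref{theorem:aLowerBoundOnTheTMDimension} only exploits that the whole tape cannot stay blank for $n$ consecutive steps, whereas here one needs a local analogue: each cell individually must spend a constant fraction of its lifetime black. A priori a cell could be visited many times while being kept almost always white, by momentarily writing a black and quickly erasing it. Ruling such behaviour out quantitatively appears to require a genuine time-space trade-off argument in the spirit of Hennie, and guaranteeing that the resulting constant $c(n)$ stays strictly positive for every $n$ is precisely where the conjecture pushes beyond the tools deployed above.
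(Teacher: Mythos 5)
First, note that the statement you were asked to prove is labelled a \emph{conjecture} in the paper: the authors do not prove it, they only establish it in special cases (Lemma \ref{theorem:LinearTimeTMsHaveDimension2} for at most linear time, Lemma \ref{theorem:CertainTMsHaveDimension1} when the space--time log-ratio tends to $0$, and Lemma \ref{theorem:UpperBoundProof} under the hypothesis $\lim_{x\to\infty} N_\tau(x)/(s_\tau(x)\,t_\tau(x))\neq 0$), and otherwise verify it empirically in $(3,2)$ space. Your Step 1 is correct, but it is in substance identical to the paper's Lemma \ref{theorem:UpperBoundProof}: from $N(\tau,x)\geq c\cdot s_\tau(x)t_\tau(x)$ one gets $\log N\geq \log c+\log s+\log t$, dividing by $\log t_\tau(x)$ and letting $x\to\infty$ kills the $\log c$ term, and Theorem \ref{theorem:SpaceTimeTheorem} supplies the matching upper bound. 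So Step 1 only reproduces a partial result the paper already has; everything hinges on Step 2.

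Step 2 is a genuine gap, as you yourself acknowledge, and it fails for two concrete reasons. (i) Crossing-sequence (Hennie-style) arguments control the \emph{number of head visits} to a cell, not the \emph{colour} of the cell between visits: a cell can be visited many times while a black symbol is written and immediately erased, or never written at all, and termination imposes no lower bound on how long a cell stays black. The paper's own pigeonhole argument (Lemma \ref{theorem:aLowerBoundOnTheTMDimension}) only yields one non-blank row in every $n$ consecutive rows, i.e.\ $N(\tau,x)\gtrsim t_\tau(x)/n$, which is very far from the claimed $c(n)\cdot s_\tau(x)t_\tau(x)$. (ii) Your reduction target is also stronger than the conjecture needs: Proposition \ref{theorem:equivalenceUpperBoundConjecture} shows that the weaker condition $\liminf_{x\to\infty}\log N_\tau(x)/\log\big(s_\tau(x)t_\tau(x)\big)=1$ already suffices, and that condition tolerates $N/(st)\to 0$. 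Meanwhile nothing in your argument excludes, say, a machine with $t_\tau(x)\approx s_\tau(x)^2$ whose tape carries only $O(1)$ black cells per row: there $N\approx t$, so $d(\tau)=1$, while the conjectured value is $\tfrac{3}{2}$. Ruling out precisely such machines is the open content of the conjecture, so the proposal does not close it.
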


Thus, Lemma \ref{theorem:CertainTMsHaveDimension1} provides a proof of the Upper Bound Conjecture in certain situations. For any TM that performs at most in linear time we have also proven the Upper Bound Conjecture in Lemma \ref{theorem:LinearTimeTMsHaveDimension2}. Below, in Lemma \ref{theorem:UpperBoundProof}, we shall prove the Upper Bound Conjecture for some other situations too. In order to prove this, we first need an additional insight.

\begin{lemma}\label{lemma:ConstantRatio}
For each TM $\tau$ there is a constant $c_\tau \in [0,1]$ with 
\[
\liminf_{x\to \infty} \frac{N_\tau(x)}{s_\tau (x) \cdot t_\tau (x)} = c_\tau.
\]

\end{lemma}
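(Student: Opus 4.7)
The plan is to observe that the ratio $\frac{N_\tau(x)}{s_\tau(x)\cdot t_\tau(x)}$ is automatically confined to the interval $[0,1]$ for every single input $x$ on which $\tau$ halts, whence its $\liminf$ is a well-defined real in $[0,1]$ which we simply declare to be the constant $c_\tau$.

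First I would dispose of the lower bound: $N_\tau(x)\geq 0$ trivially, so the quotient is non-negative. For the upper bound I would argue geometrically. Any black cell contributing to $N_\tau(x)$ lies at some position $i$ at some time $j<t_\tau(x)$ in the space-time diagram $S(\tau,x)$. By the very definition of $s_\tau(x)$, any tape position $i\geq s_\tau(x)$ is never visited by the head throughout the computation, so the color of such a cell is fixed to its initial value for all $t_\tau(x)$ time steps. Provided $\tau$ does not run in constant time, the head must at least sweep the entire input block, so $s_\tau(x)\geq x+1$; hence every cell in position $\geq s_\tau(x)$ was white to begin with and stays white and therefore contributes nothing to $N_\tau(x)$. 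Consequently every black cell sits inside the rectangle of width $s_\tau(x)$ and height $t_\tau(x)$, giving $N_\tau(x)\leq s_\tau(x)\cdot t_\tau(x)$.

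Combining the two bounds, the sequence $\bigl(\frac{N_\tau(x)}{s_\tau(x)\cdot t_\tau(x)}\bigr)_{x:\,\tau(x)\downarrow}$ is contained in $[0,1]$. A bounded sequence of reals admits a well-defined $\liminf$ lying in the same closed interval, and I would simply set $c_\tau$ equal to that $\liminf$. There is no real obstacle here; the only point requiring a word of caution is the constant-time case, where $s_\tau(x)<x+1$ is possible and ``ghost'' black cells of the input can survive in positions $s_\tau(x),\ldots,x$, potentially pushing the ratio above $1$. Since the paper has already segregated these machines by the convention $d(\tau)=2$, and the lemma will only be used to analyse the Box dimension of non-constant-time TMs in the subsequent proof of the Upper Bound Conjecture, I would either restrict the statement to that class or read $N_\tau(x)$ as counting black cells within the used $s_\tau(x)\times t_\tau(x)$ rectangle. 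No deeper argument is needed.
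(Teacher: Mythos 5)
Your proof is correct and follows essentially the same route as the paper's: the bound $N_\tau(x)\leq s_\tau(x)\cdot t_\tau(x)$ confines the ratio to $[0,1]$ for every halting input, so the $\liminf$ exists and lies in that closed interval. Your caveat about constant-time machines (where unvisited input cells remain black and could in principle break the upper bound) is a point the paper silently glosses over by merely citing the observation from the proof of the Space-time Theorem, so flagging it is a welcome extra but does not change the argument.
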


\begin{proof}
Since $N_\tau(x)$ is bounded above by $s_\tau (x) \cdot t_\tau (x)$ (we observed this before in the proof of Theorem \ref{theorem:SpaceTimeTheorem}) we get that for each TM $\tau$ we have for each $x$ that $\frac{N_\tau(x)}{s_\tau (x) \cdot t_\tau (x)} \in [0,1]$.
But then clearly the $\liminf$ is well defined and within the closed interval $[0,1]$.

\end{proof}

\begin{lemma}\label{theorem:UpperBoundProof}
In case $\lim_{x\to \infty} \frac{N_\tau(x)}{s_\tau (x) \cdot t_\tau (x)} \neq 0$ we can prove the Upper Bound Conjecture, that is 
\[
d(\tau) = 1 + \liminf_{x\to \infty} \frac{\log \big( s_\tau (x) \big) }{\log \big( t_\tau (x) \big)}.
\]
\end{lemma}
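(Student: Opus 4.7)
The upper bound $d(\tau) \leq 1 + \liminf_{x\to\infty} \frac{\log s_\tau(x)}{\log t_\tau(x)}$ is already furnished by the Space-time Theorem (Theorem \ref{theorem:SpaceTimeTheorem}), so the whole task is to prove the matching lower bound under the hypothesis. The plan is to use the assumption directly: the ratio $N_\tau(x)/(s_\tau(x) t_\tau(x))$ is bounded below by some positive constant along a cofinal sequence, which will let us transfer the multiplicative structure on $N_\tau(x)$ into an additive structure after taking logarithms, producing exactly the extra $+1$ we need.

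Concretely, set $c := \liminf_{x\to\infty} \frac{N_\tau(x)}{s_\tau(x) t_\tau(x)}$, which exists by Lemma \ref{lemma:ConstantRatio}, and interpret the hypothesis as $c > 0$. Fix any $\varepsilon \in (0,c)$. By definition of $\liminf$, for all sufficiently large $x$ we have $N_\tau(x) \geq (c-\varepsilon)\, s_\tau(x)\, t_\tau(x)$. Taking logarithms gives
\[
\log N_\tau(x) \;\geq\; \log(c-\varepsilon) + \log s_\tau(x) + \log t_\tau(x).
\]
We may restrict attention to TMs that do not halt in constant time (the constant-time case is already handled by the definition $d(\tau) = 2$ together with Lemma \ref{theorem:LinearTimeTMsHaveDimension2}), so $t_\tau(x) \to \infty$ and we may divide by $\log t_\tau(x) > 0$:
\[
\frac{\log N_\tau(x)}{\log t_\tau(x)} \;\geq\; 1 + \frac{\log s_\tau(x)}{\log t_\tau(x)} + \frac{\log(c-\varepsilon)}{\log t_\tau(x)}.
\]

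Now I take $\liminf_{x\to\infty}$ of both sides. Since $t_\tau(x)\to\infty$, the last summand on the right tends to $0$, a genuine limit, so by the standard fact that $\liminf(f+g) = \liminf f + \lim g$ whenever $g$ converges, the $\liminf$ of the right-hand side equals $1 + \liminf_{x\to\infty}\frac{\log s_\tau(x)}{\log t_\tau(x)}$. Thus
\[
d(\tau) \;=\; \liminf_{x\to\infty} \frac{\log N_\tau(x)}{\log t_\tau(x)} \;\geq\; 1 + \liminf_{x\to\infty} \frac{\log s_\tau(x)}{\log t_\tau(x)},
\]
which combined with the Space-time Theorem yields equality. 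The only subtle point is the $\liminf$-of-a-sum step, but because one of the summands actually converges this is routine; the conceptual content is entirely in the observation that a nonvanishing proportion of black cells lets the logarithm absorb the factor $s_\tau(x)$ cleanly.
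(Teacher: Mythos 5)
Your proof is correct and takes essentially the same route as the paper: both arguments rest on the observation that the hypothesis turns $\log N_\tau(x)$ into $\log c + \log s_\tau(x) + \log t_\tau(x)$ up to a term that vanishes after division by $\log t_\tau(x)$. The paper substitutes $c_\tau\, s_\tau(x)\, t_\tau(x)$ for $N_\tau(x)$ directly inside the $\liminf$ and expands to get the equality in one pass, whereas you establish only the lower bound (with explicit $\varepsilon$-management of the ratio) and invoke the Space-time Theorem for the matching upper bound --- a presentational difference only, and if anything your version is the more carefully justified of the two.
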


\begin{proof}
We may assume that $\tau$ runs in at least linear time for otherwise, the claim is proved by Lemma \ref{theorem:LinearTimeTMsHaveDimension2}. Thus $\lim_{x\to \infty} \frac{1}{t_\tau (x)} = 0$. Our assumption gives us that $\lim_{x\to \infty} \frac{N_\tau(x)}{s_\tau (x) \cdot t_\tau (x)} = c_\tau$ for some $c_\tau \neq 0$. Note that in this assumption we have a limit and not a liminf so that any subsequence converges to the same limit. Consequently we have
\[
\begin{array}{lll}
\liminf_{x\to \infty} \frac{\log (N_\tau (x))}{\log (t_\tau(x))} & = & \liminf_{x\to \infty} \frac{\log (c_\tau\cdot s_\tau (x)\cdot t_\tau (x))}{\log ( t_\tau(x))}\\\\
 & = & \liminf_{x\to \infty} \frac{\log (c_\tau) + \log( t_\tau (x)) + \log( s_\tau(x))}{\log ( t_\tau(x))}\\
\end{array}
\]
 which implies the Upper Bound Conjecture provided $c_\tau \neq 0$.
%It might be possible to prove $c_\tau \neq 0$ for all $\tau$ but I am not sure about this yet.
\end{proof}

The following proposition provides an almost equivalent formulation of the Upper Bound Conjecture.

\begin{proposition}\label{theorem:equivalenceUpperBoundConjecture}
For each TM $\tau$ we have that if 
\[
\liminf_{x\to \infty} \frac{\log (N_\tau(x))}{\log (s_\tau(x)t_\tau (x))} = 1,
\]
then the Upper Bound Conjecture holds.

Moreover, if the Upper Bound Conjecture holds uniformly for some TM $\tau$, that is $d(\tau) = 1 + \lim_{x\to \infty} \frac{\log (s_\tau (x))}{\log (t_\tau (x))}$, then  $\liminf_{x\to \infty} \frac{\log (N_\tau(x))}{\log (s_\tau(x)t_\tau (x))} = 1$.
\end{proposition}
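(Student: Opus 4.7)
The plan is to treat the two implications separately, in both cases leveraging the crude bound $N_\tau(x) \leq s_\tau(x) \cdot t_\tau(x)$ (already exploited in the proof of Theorem \ref{theorem:SpaceTimeTheorem}) together with elementary arithmetic of $\liminf$. Throughout I tacitly restrict attention to TMs whose running time is not constant, since that case is covered by Lemma \ref{theorem:LinearTimeTMsHaveDimension2}; in particular $\log t_\tau(x) > 0$ for all sufficiently large $x$, and the ratio $\frac{\log s_\tau(x)}{\log t_\tau(x)}$ lives in a bounded subinterval of $[0,2]$ (eventually in $[0,1]$) on that range.

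For the first implication, I would unfold the hypothesis $\liminf_{x\to\infty} \frac{\log N_\tau(x)}{\log(s_\tau(x) t_\tau(x))} = 1$ into: for every $\epsilon > 0$ there is an $x_0$ with $\log N_\tau(x) > (1-\epsilon)\bigl(\log s_\tau(x) + \log t_\tau(x)\bigr)$ for all $x \geq x_0$. Dividing by $\log t_\tau(x)$ gives
\[
\frac{\log N_\tau(x)}{\log t_\tau(x)} \;>\; (1-\epsilon)\left(1 + \frac{\log s_\tau(x)}{\log t_\tau(x)}\right),
\]
and taking $\liminf_{x\to\infty}$ on both sides, using $\liminf(\alpha f(x) + \beta) = \alpha \liminf f(x) + \beta$ whenever $\alpha > 0$, yields $d(\tau) \geq (1-\epsilon)\bigl(1 + \liminf_{x\to\infty} \frac{\log s_\tau(x)}{\log t_\tau(x)}\bigr)$. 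Letting $\epsilon \downarrow 0$ produces the lower bound matching the upper bound from Theorem \ref{theorem:SpaceTimeTheorem}, so equality prevails and the Upper Bound Conjecture holds for $\tau$.

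For the converse, let $a := \lim_{x\to\infty} \frac{\log s_\tau(x)}{\log t_\tau(x)}$; the hypothesis guarantees this is a genuine limit, with $a \in [0,1]$, and $d(\tau) = 1+a$ by assumption. The idea is to divide numerator and denominator of the target quotient by $\log t_\tau(x)$:
\[
\frac{\log N_\tau(x)}{\log\bigl(s_\tau(x) t_\tau(x)\bigr)} \;=\; \frac{\log N_\tau(x) / \log t_\tau(x)}{1 + \log s_\tau(x) / \log t_\tau(x)}.
\]
The denominator converges to the positive constant $1+a$, so $\liminf$ commutes with division by it, yielding $\liminf_{x\to\infty} \frac{\log N_\tau(x)}{\log(s_\tau(x) t_\tau(x))} = \frac{d(\tau)}{1+a} = \frac{1+a}{1+a} = 1$.

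The step I expect to warrant the most care is this last commutation: passing from $\liminf f_n$ to $\liminf(f_n/g_n)$ when $g_n \to L > 0$. It is legitimate here precisely because $g_n$ is a \emph{convergent} (not merely bounded) positive sequence, which is why the converse implication is phrased with an honest $\lim$ rather than a $\liminf$; weakening the hypothesis to a $\liminf$ would leave room for $g_n$ to oscillate and the factorisation would fail, which is exactly why the two implications of the proposition are ``almost'' but not quite inverse to one another.
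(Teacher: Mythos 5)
Your proof is correct and follows essentially the same route as the paper: both implications rest on the pointwise bound $N_\tau(x) \leq s_\tau(x)\, t_\tau(x)$ together with the observation that, under the respective hypotheses, $\log N_\tau(x)$ may be traded for $\log\big(s_\tau(x) t_\tau(x)\big)$ up to a factor tending to $1$, after which the decomposition $\log(s_\tau t_\tau)=\log s_\tau+\log t_\tau$ does the rest. The only cosmetic difference is in the first implication, where you establish the lower bound by an $\epsilon$-argument and close the squeeze against the upper bound of Theorem \ref{theorem:SpaceTimeTheorem}, whereas the paper first upgrades the hypothesised $\liminf$ to a genuine limit (using $N_\tau(x)\leq s_\tau(x)t_\tau(x)$) and then substitutes directly inside the $\liminf$; your treatment of the converse, dividing through by $\log t_\tau(x)$ and using convergence of the denominator, matches the paper's subsequence argument.
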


\begin{proof}
If $\liminf_{x\to \infty} \frac{\log (N_\tau(x))}{\log (s_\tau(x)t_\tau (x))} = 1$, we also have 
$\lim_{x\to \infty} \frac{\log (N_\tau(x))}{\log (s_\tau(x)t_\tau (x))} = 1$ since ${\log (N_\tau(x))}\leq {\log (s_\tau(x)t_\tau (x))}$ for each $x$. Consequently,
\[
\begin{array}{lll}
d(\tau) &=& \liminf_{x\to \infty} \frac{\log (N_\tau(x))}{\log (t_\tau (x))} \\\\
 & = & \liminf_{x\to \infty} \frac{\log (s_\tau(x)t_\tau (x))}{\log (t_\tau (x))} \\\\
  & = & \liminf_{x\to \infty} \frac{ \log (t_\tau (x))+\log (s_\tau(x))}{\log (t_\tau (x))} \\\\
  & = & 1 + \liminf_{x\to \infty} \frac{\log (s_\tau (x))}{\log (t_\tau (x))} .\\
\end{array}
\]  
For the other direction we assume $d(\tau) =1 + \lim_{x\to \infty} \frac{\log (s_\tau (x))}{\log (t_\tau (x))}$. Consequently
\[
\begin{array}{lll}
\liminf_{x\to \infty} \frac{\log (N_\tau(x))}{\log (t_\tau (x))} &=&  1 + \lim_{x\to \infty} \frac{\log (s_\tau (x))}{\log (t_\tau (x))}\\\\

 &=&   \lim_{x\to \infty} \big( 1 + \frac{\log (s_\tau (x))}{\log (t_\tau (x))} \big)\\\\

 &=&   \lim_{x\to \infty} \big( \frac{\log (t_\tau (x))}{\log (t_\tau (x))} + \frac{\log (s_\tau (x))}{\log (t_\tau (x))} \big) \\\\
 
 &=&   \lim_{x\to \infty} \frac{\log (s_\tau (x)t_\tau (x))}{\log (t_\tau (x))}.
 \end{array}
\]  
Using this identity $\liminf_{x\to \infty} \frac{\log (N_\tau(x))}{\log (t_\tau (x))} = \lim_{x\to \infty} \frac{\log (s_\tau (x)t_\tau (x))}{\log (t_\tau (x))}$ we see that for any subsequence $x_n \to \infty$ we have that 
\[
\lim_{n\to \infty}  \Big(\frac{\log (N_\tau(x_n))}{\log (t_\tau (x_n))}\Big) \big/ \Big(\frac{\log (s_\tau (x_n)t_\tau (x_n))}{\log (t_\tau (x_n))}\Big) \geq 1.
\]
Consequently,
\[
\begin{array}{lllll}
\liminf_{x\to \infty} \frac{\log (N_\tau (x))}{\log(s_\tau(x) t_\tau (x))} & = & 
\liminf_{x\to \infty} \frac{\frac{\log (N_\tau (x))}{\log (t_\tau(x))}}{\frac{\log(s_\tau(x) t_\tau (x))}{\log ( t_\tau(x))}}
 & \geq  & 1. 
\end{array}
\]
But since $N_\tau(x) \leq s_\tau(x) t_\tau (x)$ the possibility $\liminf_{x\to \infty} \frac{\log (N_\tau (x))}{\log(s_\tau(x) t_\tau (x))} > 1$ cannot occur and we are done.
\end{proof}

\subsection{The Space-Time Theorem and $\sf P$ versus $\sf NP$}

As usual, we denote by $\sf P$ the class of problems that can be solved by a TM which uses an amount of time that is bounded by some polynomial applied to the size of the input (representing an instantiation of the particular problem).

Likewise, we denote by $\sf NP$ the class of problems so that any solution of this problem can be checked to be indeed a solution to this problem by a TM which uses an amount of time that is bounded by some polynomial applied to the size of the input. Here, the $\sf N$ in $\sf NP$ stands for \emph{non-deterministic}. That is to say, a non-deterministic TM would run in polynomial time by just guessing the right solution and then checking that it is indeed a solution. It is one of the major open question in (theoretical) computer science wether ${\sf P} = {\sf NP}$ or not.

By $\sf PSPACE$ we denote the class of problems that can be solved by a TM which uses an amount of memory space that is bounded by some polynomial applied to the size of the input. It is well-known that ${\sf NP} \subseteq {\sf PSPACE}$. Thus, by Lemma \ref{theorem:CertainTMsHaveDimension1} we can state a separation of $\sf P$ and $\sf NP$ in terms of dimensions.

\begin{quote}
Let $\Pi$ be some $\sf NP$-complete problem. If for each $\sf PSPACE$ Turing machine $\tau$ that decides $\Pi$ we have that $d(\tau) = 1$, then ${\sf P} \neq {\sf NP}$.
\end{quote}

Clearly, this does not constitute a real strategy since, for one, in
general it is undecidable whether $d(\tau) = 1$ \cite{Tadaki10}.

\newpage
%% This creates a non-numbered section. You can use a larger font with
%% \Large or even \LARGE to distinguish from sections. 
\section*{{\LARGE Part II: Experimental setting}}
%% Send to the table of contents. You can also magnify the font. 
\addcontentsline{toc}{section}{Part II}

In this second part of the paper we describe the experiment we have performed to empirically test if the theoretical results also hold in cases that do not satisfy the necessary requirements for the theoretical results to be applied.

\section{The experiment}\label{section:TheExperiment}

We have already proven on purely theoretical grounds that there is a relation between runtimes and fractal dimension of the space-time diagrams. However, our theoretical results only apply to a restricted class of TMs.

In the experiment we wanted to also study the fractal dimension of the space-time diagrams in cases where our theoretical results do not apply. Moreover, guided by the first outcomes of our experiment we formulated the Upper Bound Conjecture (Conjecture \ref{conjecture:UpperBoundConjecture}) and gathered data so to investigate if the conjecture holds in $(3,2)$ space.

\subsection{Slow convergence}
For TMs $\tau$ that run in at most linear time we have proven in Lemma \ref{theorem:LinearTimeTMsHaveDimension2} that $d(\tau) = 2$. Our aim is to use computer experiments to compute the Box dimension of all TMs $\tau$ where $d(\tau)$ is not predicted by any theoretical result.

A substantial complication in this project is caused by the occurrence of logarithms in the definition of $d(\tau)$. As a consequence, increase in precision of $d(\tau)$ requires exponentially larger inputs. This makes direct brute-force computation unfeasible. As an example, let us consider 2,2-TM 346 again whose space time diagrams we saw in Figure~\ref{figure:TM346SpaceTimeDiagrams}. By Lemma \ref{theorem:LinearTimeTMsHaveDimension2} we know that the Box dimension of this Turing machine equals two. However, Figure \ref{figure:TM346SlowConvergenceBoxDimension} below shows us how slow the rate of convergence is.

\begin{figure}[htb!]
  \centering
  \includegraphics[height=8cm, angle =-90]{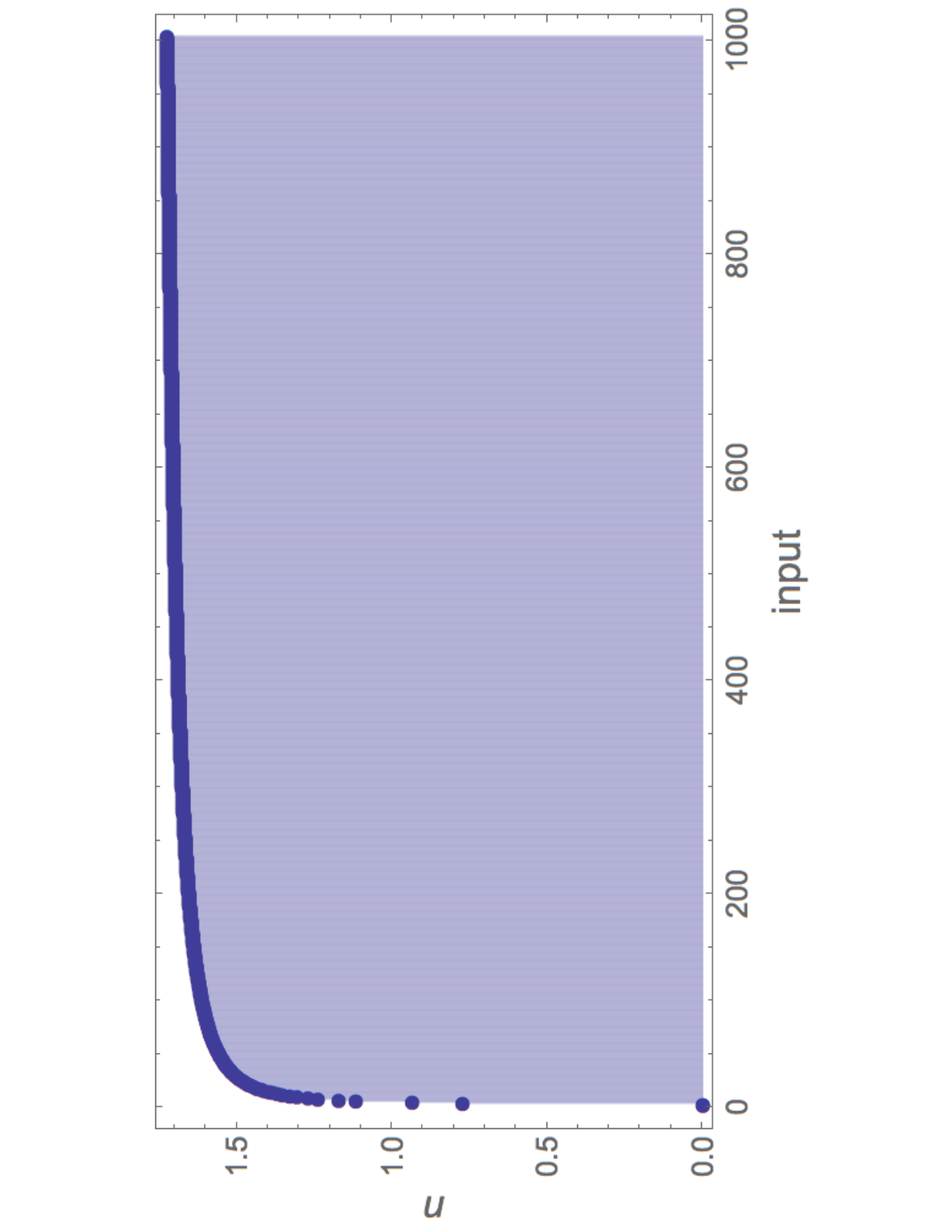}
  \caption{The figure shows an estimate of the Box dimension of 2,2 TM with TM number 346. On the horizontal axis the input is shown and the vertical axis shows the corresponding approximation of the Box dimension. Note that we know that the function converges to 2 when the input tends to infinity.}
  \label{figure:TM346SlowConvergenceBoxDimension}
\end{figure}

Our way out here is to apply numerical and mathematical analysis to the functions involved so that we can retrieve their limit behavior. In particular, we were interested in three different functions.

As before, for $\tau$ a TM we denote by $t_\tau(x)$ the amount of time-steps needed for $\tau$ to halt on input $x$; by $N_\tau(x)$ we denote the number of black cells in the space-time diagram of $\tau$ on input $x$ and by $s_\tau(x)$ the distance between the edge of the tape and the furthest cell visited by $\tau$ on input $x$.

With these functions and knowledge of their asymptotic behavior, we can compute the corresponding dimension $d(\tau)$ and the upper bound\\ $1~+~\liminf_{x\to \infty}\frac{s_\tau (x)}{t_\tau (x)}$. The functions are guessed by looking at large enough initial sequences of their outcomes in a mechanized fashion. The few cases that cannot be done in a mechanized version were analyzed by hand.

It is important to bear this process in mind and the fact that we work with guesses that can be wrong in principle. For example, if we speak of a TM $\tau$ that performs in time of order $n^2$ this means in this paper that,  by definition, after applying our particular analyzing process, $\tau$ was classified as an $\mathcal{O} (n^2)$ time performer. It may well be that in reality $\tau$ needs exponential time. However, there are strong indications that our guessing process is rather accurate \cite{ZenilSJMindsMach12,JoostenSZ11}.

\subsection{Methodology}

In this subsection we shall describe the steps that were performed in obtaining our results.
Basically, the methodology consists of the following steps.

\begin{enumerate}
\item
Each TM that lives in 2,2 space also occurs in (3,2) space so for the final results it suffices to focus on this data-set. The TMs that diverge on all inputs were removed from the initial list of 2\,985\,984 TMs in the (3,2) space, since for them the dimension is simply not defined. For the remaining TMs we erased all diverging inputs from the sequence to which we were to apply our analysis. Since we are only interested in limit behavior of any subsequences this does not alter our final results.

We isolated the TMs for which there is no theorem that predicts the corresponding dimension. By Lemma \ref{theorem:LinearTimeTMsHaveDimension2} and Lemma \ref{theorem:CertainTMsHaveDimension1} this means that we only needed to pay attention to those TMs which use more than linear time. Moreover, we also removed all simultaneous EXP time and PSPACE performers to finally end up with a collection of TMs. The distribution of the resulting collection is summarized in Table \ref{tab:comClasses32} below.

\begin{table}[htbp!]
  \centering
  \begin{tabular}{lllr}
    \textbf{Boxes} & \textbf{Runtime} & \textbf{Space} &
    \textbf{Machines}\\
    $\mathcal{O}(n^3)$ & $\mathcal{O}(n^2)$ & $\mathcal{O}(n)$ & 3358 \\
    $\mathcal{O}(n^4)$ & $\mathcal{O}(n^3)$ & $\mathcal{O}(n)$ & 6 \\
    $\omega({\sf P})$ & $\omega({\sf P})$ &  $\omega({\sf P})$ & 14 \\
%    $O(n^2)$ & $O(n)$ & $O(n)$ & 6 \\
%    $O(Exp)$ & $O(Exp)$ &  $O(n)$ & 1792\\
  \end{tabular}
  \caption{Distribution of those TMs in (3,2) space of which we had to compute the corresponding dimension over their complexity classes. By $\omega({\sf P})$ we denote the little $\omega$ notation of the class of polynomials and hereby collect any super-polynomial behavior in one bucket.}
  \label{tab:comClasses32}
\end{table}

In addition there are 1\,792 TMs that perform in exponential time and linear space, but clearly they needed no further analysis since we know on theoretical grounds that their corresponding dimension is 1. All other machines in (3,2) space were very simple in terms of time computational complexity, that is, they perform at most in linear time.

\item
Per TM $\tau$, we determined/guessed its function $s_\tau(x)$ corresponding to the space usage of $\tau$ on input $x$. Although this guessing was already performed in \cite{JoostenSZ11} we decided to re-do the process. The main reasons to do this were a new release of our analyzing tool \emph{Mathematica} together with the fact that the authors had obtained new insights on how to best perform the analysis. Our results coincided in large part with the ones obtained in \cite{JoostenSZ11} but also showed minor discrepancies.

\item
Per TM $\tau$, we determined its function $t_\tau(x)$ corresponding to the time usage of $\tau$ on input $x$;

\item
Per TM $\tau$, we determined its function $N_\tau(x)$ corresponding to the number of black cells in the space-time diagram of $\tau$ on input $x$;

\item
Per TM $\tau$, we computed $\liminf_{x\to \infty} \frac{s_\tau(x)}{t_\tau(x)}$;

%\item
%Per TM $\tau$, we checked whether $d (\tau) \geq 1$;

\item
Per TM $\tau$, we computed its dimension $d(\tau)$ as $d(\tau) = \displaystyle \liminf_{x\to \infty} \frac{\log(N_\tau(x))}{\log(t_\tau(x))}$;

\item
Per TM $\tau$, we compared its dimension $d(\tau)$ to its theoretical upperbound $1+ \displaystyle \liminf_{x\to \infty} \frac{\log(s_\tau(x))}{\log(t_\tau(x))}$ which we computed separately;

\item
Per TM $\tau$, we computed $\liminf_{x\to \infty} \frac{\log (N_\tau(x))}{\log (s_\tau(x)t_\tau (x))}$ and $\liminf_{x\to \infty} \frac{N_\tau(x)}{s_\tau (x) \cdot t_\tau (x)}$.

\end{enumerate}

\subsection{Alternating convergent behavior}
Some of the Turing Machines possessed alternating asymptotic behavior. This has been already observed in \cite{JoostenSZ11}. Typically the alternation reflects modular properties of the input like being odd or even or of the number of states.

The differences between the alternating subsequences can be rather drastic though. The most extreme example we found is reflected in Figure \ref{fig:alternatingBeh} below.

\begin{figure}[htbp!]
  \centering
  \includegraphics[height=8cm,angle=90]{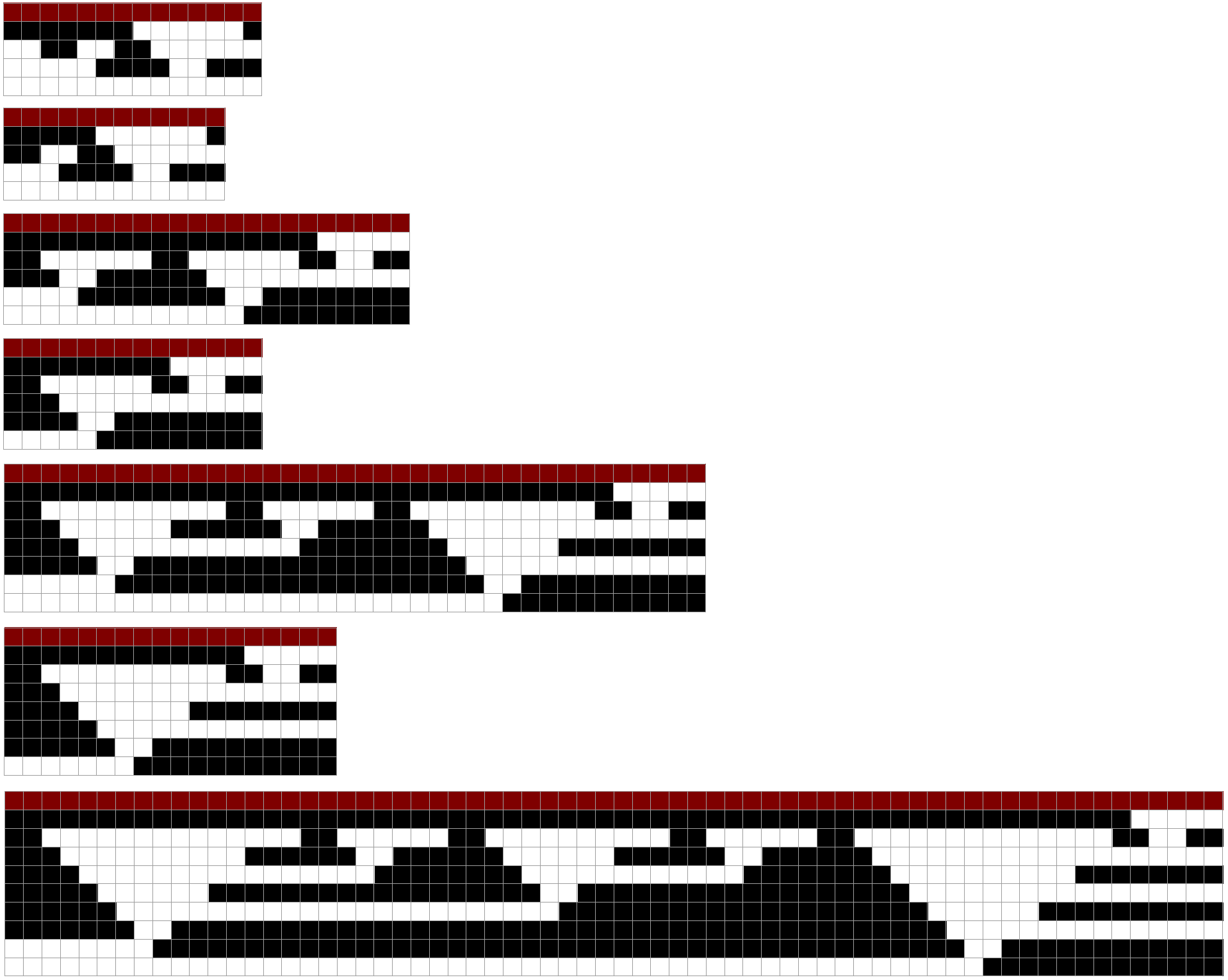}
  \caption{Alternating linear and exponential runtime behavior for TM 1\,728\,529}
  \label{fig:alternatingBeh}
\end{figure}

Figure \ref{fig:alternatingBeh} shows the space-time diagrams for TM $\tau$ with number
1\,728\,529 for inputs 1 to 7. For convenience we have changed the orientation of the diagrams so that time `goes from left to right' instead of from `top to bottom'.

This machine runs in linear time for even inputs and
exponential time for odd inputs. The runtime is given by:
\begin{displaymath}
  t_\tau(x) = \left\{
      \begin{array}{ll}        
        2 (x-2)+9 & \text{if} \ x \ \text{is even;}\\
        2 (x-1)+3\ 2^{\frac{x-1}{2}+1}+5 & \text{if} \ x \ \text{is odd.}
      \end{array}\right.
\end{displaymath}
The number of black cells ($N_\tau (x)$) in the space-time diagram exhibits the same behavior. Note however, that the space that $\tau$ uses  is linear in the size of the input and in particular the amount of tape cells used is equal to the size of the output.

Moreover, we note that the sequence of outputs is of a very simple and regular nature. The outputs can be grouped in series of two, where the output on input $2\cdot n +1$ consecutive black cells is equal to the output on input $2\cdot n + 2$ consecutive black cells. So, in a sense this TM incorporates two different algorithms to compute this output: one in linear time, the other, in exponential time.

We have found alternating sequences of periodicity 2, 3 and 6. 
Like we noted in \cite{JoostenSZ11}, the periodicity typically reflects either the number of states, the number of colors, or a divisor of their product. 
Figure~\ref{fig:alternating6} shows an example of TM Number 1\,159\,345 whose corresponding box counting sequence $N_\tau (x)$ has periodicity six.

\begin{figure}[htbp!]
  \centering
  \includegraphics[width=8.1cm,angle =-90]{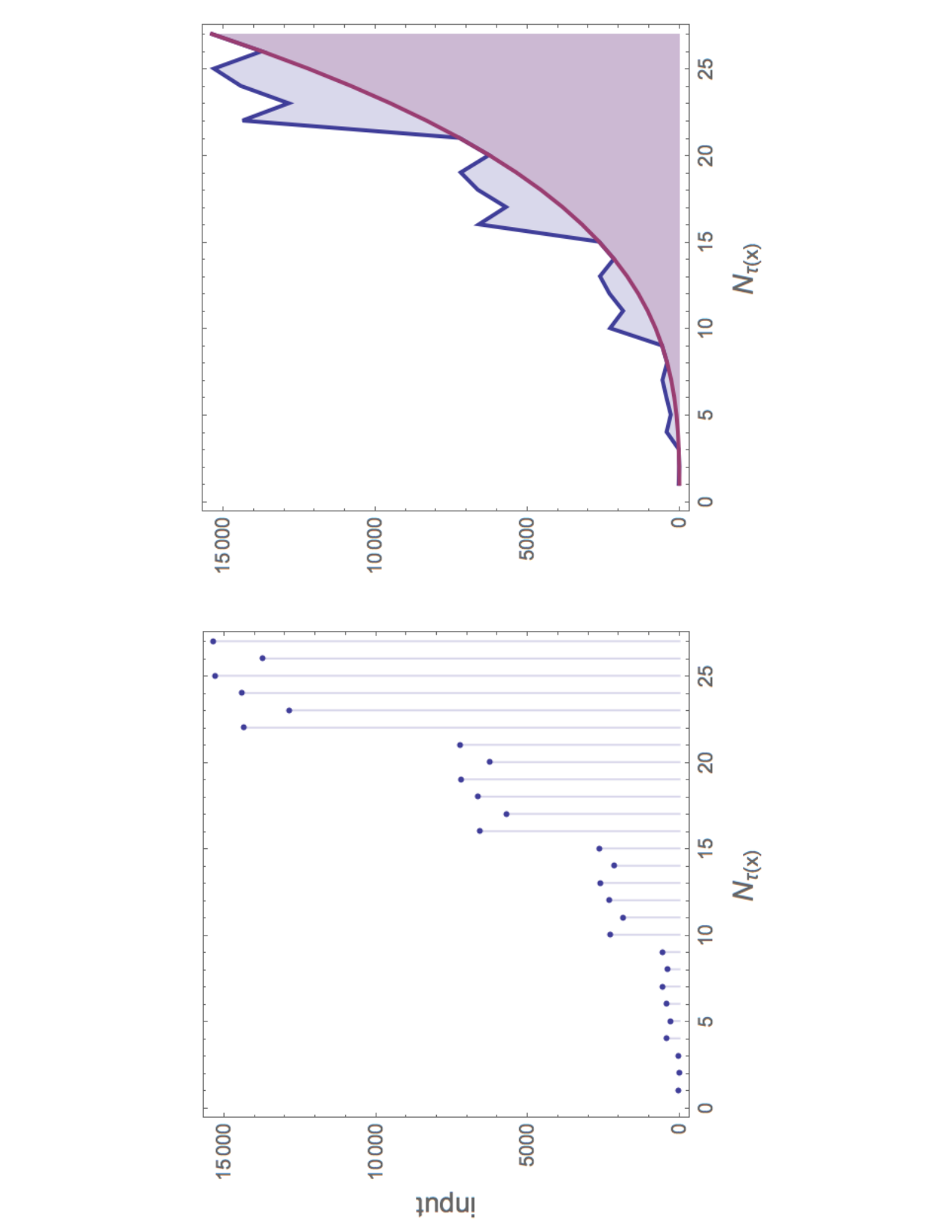}
  \caption{Alternating values for $N_\tau(x)$ with periodicity 6. The depicted values are for TM number 1\,159\,345. The diagram on the left shows the data-set and on the right we included a fit from below.}
  \label{fig:alternating6}
\end{figure}

On the left of Figure~\ref{fig:alternating6} we show the points $N_\tau(x)$ on the vertical axis plotted against the input $x$. On the right of the same figure we estimated a fit from below.

This alternating behavior reflects the richness of what we sometimes refer to as \emph{the micro-cosmos of small Turing machines}. It is this alternating behavior which complicated analyzing the data set in a straight-forward automated fashion.

\subsection{Determining the important functions}

In this subsection we would mainly like to stress that most of the computational effort for this paper has actually been put into determining/guessing the functions $s_\tau (x), t_\tau(x)$ and $N_\tau(x)$ and computing the corresponding limits.

As may have become manifest from the previous subsection, it is hard to automatically guess perfect matches for these functions in case there is alternating behavior present. Finally we could deal with all functions in a satisfactory way. Notwithstanding our confidence, it is good to bear in mind that all classifications provided in this paper are classifications given the current methodology.

We shall here briefly describe how we proceeded to guess our functions. The methodology is fairly similar as performed in \cite{JoostenSZ11}. However, for this project we used newer tools and a slightly more sophisticated methodology which accounts for possible differences with \cite{JoostenSZ11}. Schematically the guessing process can be split into the following steps.

\begin{enumerate}
\item
We collected the sequences for time-usage $t_\tau (x)$ and space-usage $s_\tau(x)$ from the TM data set as described in Section \ref{section:SmallTMdataBase} of this paper.

\item
These sequences $t_\tau (x)$ and $s_\tau(x)$ are only given for the first 21 different inputs.
We used an initial segment of 15 elements of these sequences to guess
in an automated fashion the corresponding function that allegedly
generates this sequence. In some cases the beginning
of the sequence (up to three elements) was removed because the beginning did not match the general pattern that only occurred later on in the sequence. If we would leave the first values, \emph{Mathematica} was no longer able to find the general pattern.
The guessing process was done in \emph{Mathematica} v.8 and 9 using the \verb;FindSequenceFunction;
as built-in in this software. In some cases 
\verb;FindSequenceFunction; came with a solution, in others it did not. The function
\verb;FindSequenceFunction; does various standard numerical and algebraic analyses on the sequences but also checks for obvious recurrence patterns. The function, built into the computer algebra system \emph{Mathematica}, takes a sequence of integer values $\{a_1, a_2, \ldots, a_m\}$ to define a function that yields a sequence $\{a_n\}_{n\in \omega}$ which coincides on the first $m$ values.

\verb;FindSequenceFunction; finds results in terms of a wide range of integer functions such as sums and series coefficients, as well as implicit solutions to difference equations using early elements in the list to find candidate functions, then validates the predicted function by looking at later elements.

\item
Thus, we obtain two lists: a list $L_1$ of TMs where we found a guess
and a list $L_2$ where we did not find any guess. From the initial list of 528 runtime sequences, we
  could not guess 11, and from the 167 space sequences, we could not guess
  15. Note that this number need not be equal since TMs from various different functions had the same space sequence. 
  
  Moreover, 288 runtime sequences and 85 space sequences in $L_1$ were
alternators. \emph{Mathematica} guessed the right function using terms like $(-1)^x$. However, in computing the $\liminf$, we manually split those sequences into, for example, an even and an odd part, to obtain the corresponding limits.

\item
We performed a check on our guesses as collected in $L_1$ by applying
the guessed function to inputs 16--21. In almost all cases our guess
turned out to be predictive and coincided with the real values. For
those few cases where there was a discrepancy between the guesses and
the actual values, we made a new guess based on a larger initial segment, now consisting of the first 18 elements and then testing it once more on new real values. Finally we were able to guess and successfully check all of the sequences --both space and time usage-- in $L_1$.

\item
From the list $L_1$ we deleted all complexities for which we knew the dimension on theoretical grounds so to obtain a list $L_3$.

\item
For the TMs in $L_3$ we used the supercomputing resources of CICA
(Andalusian Centre for Scientific Computing) to compute the
corresponding sequences $N_\tau (x)$ with a C++
  TM simulator. To reduce the computational effort, for each set of equivalent TMs (up to a geometrical transformation, such as state mirroring) only one representative was run. We applied the guessing process as described above for $t_\tau(x)$ and $s_\tau(x)$
also to $N_\tau(x)$ to come up with corresponding functions.

\item
For the sequences in $L_2$ we applied a semi-manual process. Basically, there were three different procedures that we applied so to find solutions also in $L_2$ for the sequences $s_\tau(x), t_\tau(x)$ and $N_\tau(x)$.
 
\begin{enumerate}
\item
In most of the cases, there was alternating behavior present. We could
read off the periodicity from looking at graphs as for example in
Figure \ref{fig:alternating6}. Sometimes, looking directly at the
space-time diagrams was more informative. In all of these cases but one, we
finally did find functions for the subsequences using our methodology
as described above. As splitting the sequences into 2, 3 or 6
  alternating ones reduces the length of the input sequence of
  \texttt{FindSequenceFunction}, we run in some cases 40 or 60 more inputs
with the C++ simulator to end up with a sufficiently large data set.

One alternating TM did not succumb to this methodology. This was TM 582\,263 whose treatment is included in Section \ref{sec:space-time-theorem}. We run this TM for 35 inputs with the simulator and
observed that $N_\tau(x)/(s_\tau(x) t_\tau(x))$ clearly converges to a
constant --one for each subsequence-- so we approximated $N_\tau(x)$ by $c\cdot  s_\tau(x) t_\tau(x)$
which was enough for the log-limit without knowing the exact value of
$c$.

\item
In some cases the regularity was not obvious to \emph{Mathematica} but where evident when looking at space-time diagrams and/or the binary expansion of the output. In these cases we could manage by just feeding our insight into \emph{Mathematica} in that we let it work, for example, on the binary expansion of the sequences.

\item
In some cases the recurrences were just too complicated for \emph{Mathematica} v8. In these cases we carefully studied the space-time diagrams analyzing what kind of recurrences there were present. Then, the observed recurrences were fed into \texttt{FindSequenceFunction} where we left FindSequenceFunction find out the exact nature and coefficients of the corresponding recurrences. One such example concerns the TM that produces the largest possible outputs in (3,2) space: the so-called \emph{Busy Beaver} as detailed in Section \ref{section:ExpSpaceAndBusyBeaver}.

\end{enumerate}

\item
After having successfully (allegedly) found the functions $s_\tau(x)$, $t_\tau(x)$ and $N_\tau (x)$ we could compute the values for $d(x) = \liminf_{x\to \infty} \frac{N_\tau(x)}{t_\tau(x)}$ and $\liminf_{x\to \infty} \frac{s_\tau(x)}{t_\tau(x)}$. In most cases a simple limit sufficed. For alternating behavior we had to select most of the times the subsequences by hand so to end up with the $\liminf$ value. For some alternating sequences the $\liminf$ value could just be obtained by combining on the one hand the $\liminf$ of $N_\tau(x)$ (as depicted in Figure \ref{fig:alternating6}) or $s_\tau(x)$ respectively, and on the other hand $t_\tau (x)$.

\end{enumerate}

\section{Most salient results of the experiment}\label{section:MostSalientFindings}

In this section we shall present the main results of our investigations. 
The space of TMs which employ only  2 colors and 2 states is clearly contained in (3,2) space. However, we find it instructive to dedicate first a subsection to the findings in (2,2) space. Apart from the first subsection, all other results in this section refer to our findings in (3,2) space.

\subsection{Findings in (2,2) space}

In (2,2) space there was a total of 74 different functions. Of these functions, only 5 of them where computed by some super-linear time TMs. Note, this does not mean that all TMs computing this function performed in super-linear time. For example, the tape identity has many constant time performing TMs that compute it, but also some exponential time performing TMs that compute it.

In total, in (2,2) space, there  are only 7 TMs that run in super-polynomial time. Three of them run in exp-time, all computing the tape-identity. The other four TMs compute different functions. These functions do roughly compute a function that doubles the tape input, see Figure \ref{figure:FourQuadraticPerformers}.

\begin{figure}[htb!]
  \centering
  \includegraphics[height=5.7cm]{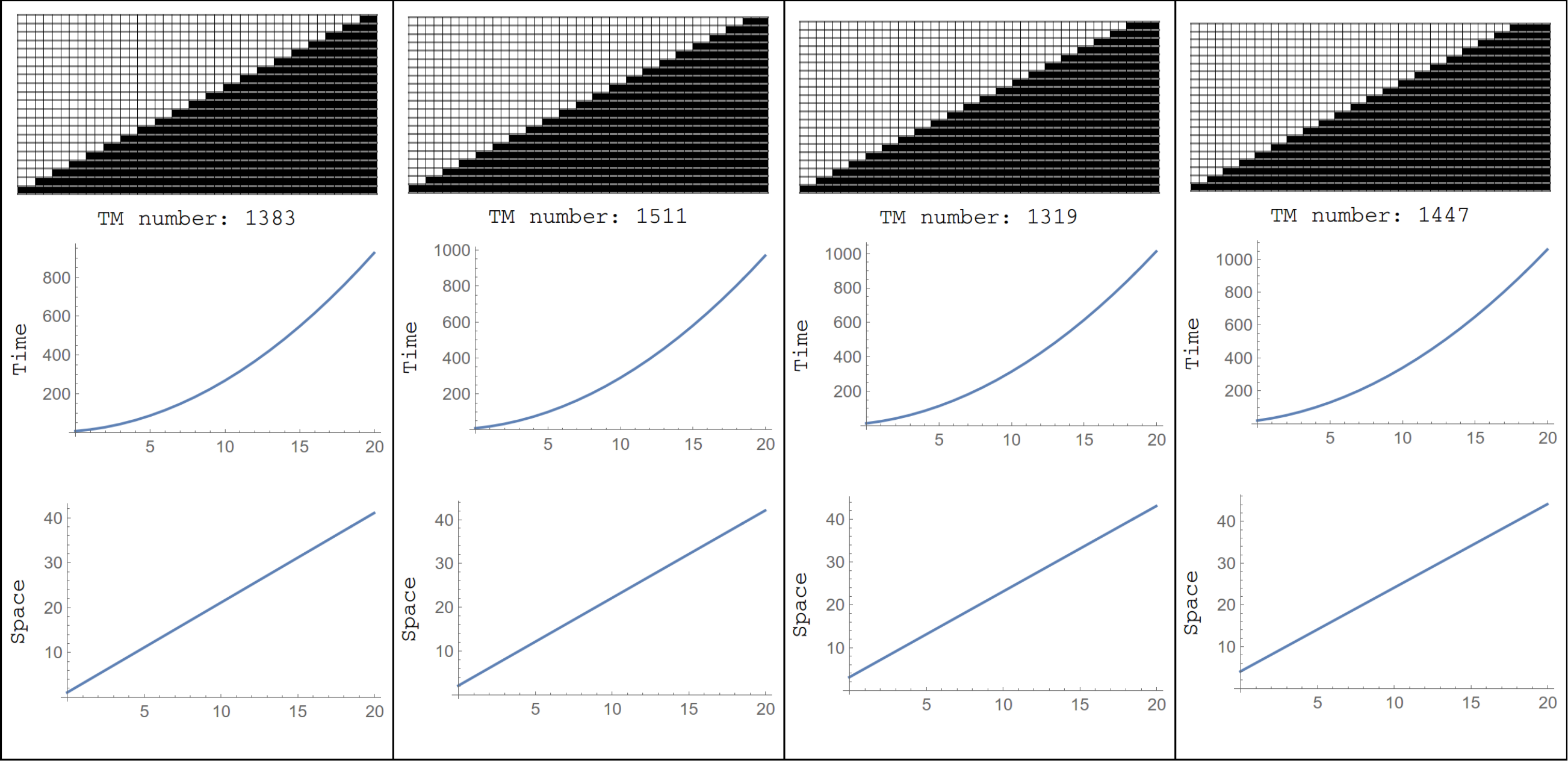}
  \caption{The figure shows the four different functions that are computed by the four TMs that have quadratic runtime in 2,2 space. The diagrams show the outputs on increasing inputs. So for example, in the left-most diagram we see that TM with number 1383 (recall this is the code in (2,2) space) outputs two black consecutive cells on input 1, and more in general it outputs $2n$ black consecutive cells on input $n$.}
  \label{figure:FourQuadraticPerformers}
\end{figure}

All these four TMs perform in quadratic time and linear space. We computed the dimension for these functions and all turned out to have dimension $\frac{3}{2}$. We observe that this is exactly the upper bound as predicted by the Space-Time Theorem. We saw this phenomenon in (3,2) space as well.

The only three exponential time performers used linear space so by Lemma \ref{theorem:CertainTMsHaveDimension1} we already know that the dimension of those TMs should be one. This has been checked also in \emph{Mathematica}. The check was not really performed to check our theoretical results, rather the check was used as a test-case for our analyzing software.

We saw that a TM in (2,2) space runs in super-polynomial time if and only if its dimension equals 1. This observation is no longer valid in (3,2) space though.

\subsection{Exponential space and the Busy Beaver}\label{section:ExpSpaceAndBusyBeaver}
%
%\begin{figure}[htbp!]
%  \centering
%  \includegraphics[width=9cm]{expSpace32}
%  \caption{Exp-space in (3,2)}
%  \label{fig:expSpace32}
%\end{figure}
In the remainder of this section we shall focus on the TMs in (3,2) space. That space contains 2\,985\,984 many different TMs which compute 3\,886 different functions. Almost all TMs used at most linear space for their computations. The only exception to this was when the TM used exponential space. Curiously enough, in (3,2) space there was no space usage in between linear and exponential space.

In \cite{OurNewDemonstration} one can see an overview of the EXP-SPACE performing TMs. For most of these TMs it was not too hard to find an explicit formula for the space usage. An example is TM with number 683\,863 whose corresponding space usage is:
\begin{displaymath}
s_{683.863}(x) = 2 \left(\frac{x+1}{2}+2^{\frac{x+1}{2}}-1\right)  
\end{displaymath}
%
%\begin{figure}[htbp!]
%  \centering
%  \includegraphics[width=6cm]{exec683863}
%  \caption{Execution of machine number 683\,863 for inputs 1 to 4}
%  \label{fig:exec683863}
%\end{figure}
%Figure~\ref{fig:exec683863} displays the execution of machine number
The space-time time diagrams for TM
683\,863 contained sufficiently much regularity so that \emph{Mathematica} could guess the corresponding functions. For various other EXP-SPACE performers we had to help \emph{Mathematica} by suggesting it to what kind of recursion it should look for. This occurred also with the so-called \emph{Busy Beaver}.

%\subsubsection{The Busy Beaver}
Classically speaking the \emph{Busy Beaver function} outputs on input $n$ the longest time that any TM with $n$ states runs when executed on a two-way infinite tape with empty input \cite{Rado1962}. In analogy, in the context of this paper we shall call a TM $\beta$ a \emph{Busy Beaver} whenever for each TM $\tau$, there is some value $x_0$ so that for all $x\geq x_0$ we have $t_\beta(x) \geq t_\tau (x)$. 
The equivalent machines 599\,063 and 666\,364 are the Busy Beavers in the (3,2) space. They
compute the largest runtime, space and boxes sequences. They also produce the longest output strings. For the remainder of this subsection we shall denote the Busy Beaver TM by $\beta$. As mentioned, there are of course two actual TMs that compute the Busy Beaver but they have the exact same behavior and we shall not distinguish between them.

\begin{figure}[htbp!]
  \centering
  \includegraphics[height=12cm,angle=90]{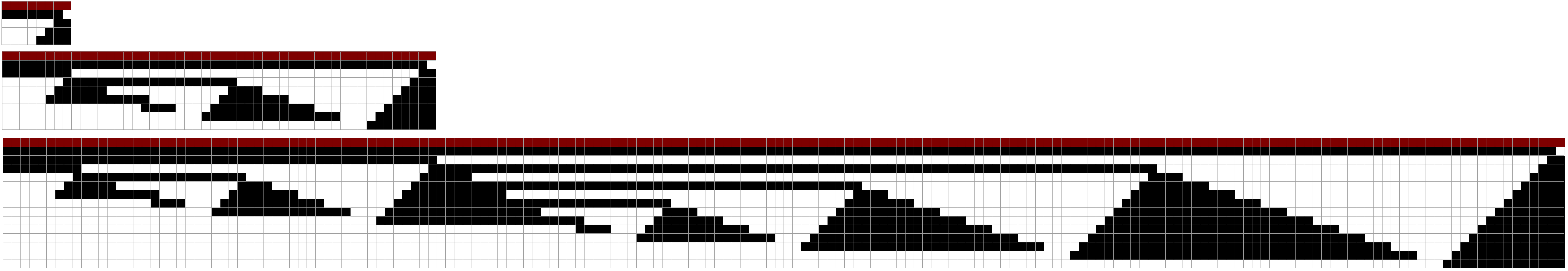}
  \caption{Execution of the Busy Beaver on the first three inputs}
  \label{fig:BBexec}
\end{figure}

Figure~\ref{fig:BBexec} shows the execution of machine 666\,364 for 
inputs 1 to 3. The diagrams have been rotated to save space. As one can see, the series of outputs is very regular and so is the sequence of cells used by the computation.  Nonetheless, \emph{Mathematica} did not find a recurrence between the consecutive values. This was due to a minor error term.

That is, if one looks at the amounts of used cells for consecutive inputs and their differences, then modulo a small error term, there is a clear tendency. Let $x$ denote the number of consecutive black input cells. Looking at the ratio between consecutive values yielded us to isolate the disturbing difference term.

\begin{table}[htbp!]
  \centering
  \begin{tabular}{|c|c|c|c|c|}\hline
    $\mathbf{x}$   & $\mathbf{s_\beta(x)}$ & $\mathbf{s_\beta(x)-s_\beta(x-1)}$ & $\mathbf{3/2(s_\beta(x-1) -
      s_\beta(x-2))}$ & \textbf{Difference}\\\hline\hline 
1&    3 & -- & -- &  -- \\
2&    7 & 4  & -- &  -- \\
3& 13 & 6 & 6 & 0\\
4&    22 & 9 & 9 & 0\\
5&    36 & 14 & 13+$1/2$ & $1/2$ \\
6&    57 & 21 & 21 & 0\\
7&    88 & 31 & 31+$1/2$ & $-1/2$ \\
8&   135 & 47 & 46+$1/2$ & $1/2$ \\
9&   205 & 70 & 70+$1/2$ & $-1/2$ \\
10&   310 & 105 &  105 & 0\\\hline
  \end{tabular}
  \caption{The structure of the space sequence}
  \label{tab:spaceBB}
\end{table}

So, ignoring the exact nature of the error term, the recurrence equation for the space is given in~\eqref{eq:1}.

\begin{equation}
  \label{eq:1}
    \begin{split}
    s_{\beta}(1) = & 3\\
    s_{\beta}(2) = & 7\\
  s_{\beta}(p) = & \frac{1}{2} \left(5 \ s_\beta(p-1) - 3 \ s_\beta(p-2) + g(p)\right) \\
  \end{split}
\end{equation}
where $g(p)$ is a function\footnote{When we forced \emph{Mathematica} to focus on the error term, it came up with the exact recurrence where $g(p)= \bigg(-(3 \ s_\beta(p-3)-5 \ s_\beta(p-2)+2
\ s_\beta(p-1))^3
 -2  \big(\frac{3}{2} \ s_\beta(p-4)-\frac{5}{2} \ s_\beta(p-3)+
\ s_\beta(p-2)\big)   \big(1-(3 \ s_\beta(p-3)-5
   \ s_\beta(p-2)+2 \ s_\beta(p-1))^2\big)+ 
\big(1-(3 \ s_\beta(p-4)-5 \ s_\beta(p-3)+2 \ s_\beta(p-2))^2\big)
    \big(1-(3 \ s_\beta(p-3)-5 \ s_\beta(p-2)+2
   \ s_\beta(p-1))^2\big)\bigg)
    \sin ^2\big(\pi 
 (\frac{5}{2}
   \ s_\beta(p-1)-\frac{3}{2} \ s_\beta(p-2))\big)$ after defining $s_\beta(-1)=s_\beta(0)=0$.} that takes values in
$\{-1, 0, 1\}$.

The runtime depends on the space and we found the following
recurrence relation for it:
\begin{equation}
  \label{eq:3}
  \begin{split}
  t_\beta(1) = & 7 \\
  t_\beta(i) = & \frac{3}{2} \ s_\beta(i-1)^2-\frac{1}{2} \sin
  ^4\left(\frac{1}{2} \pi  \ s_\beta(i-1)\right)+ \\ 
 & \frac{1}{2} \ s_\beta(i-1) (\cos (\pi \ s_\beta(i-1))+15)+ \ t_\beta(i-1)+8
  \end{split}  
\end{equation}

\noindent
Finally, by close inspection on the space-time diagrams we could guide \emph{Mathematica} to look for specific kind of recurrences to finally come up with

\begin{equation}
  \label{eq:4}
  \begin{split}
    N_\beta(1) = & 13\\
    N_\beta(i) = &  \frac{1}{32} \big(32 \ N_\beta(i-1)+32 \ t_\beta(i)+32 \
      s_\beta(i-1)^3 \\
      & + 152 \ s_\beta(i-1)^2+140 \ s_\beta(i-1)+ 16
   \ s_\beta(i)^2 \\ & + 16 \ s_\beta(i)+16 \ s_\beta(i-1)^2 \cos (\pi  \
   s_\beta(i-1)) \\
& + 24 \ s_\beta(i-1) \cos (\pi  \ s_\beta(i-1)) \\ & - 4
   \ s_\beta(i-1) \cos (2 \pi  \ s_\beta(i-1))-3 \cos (\pi  \ s_\beta(i-1))
   \\ & -9 \cos (2 \pi  \ s_\beta(i-1))-\cos (3 \pi 
   \ s_\beta(i-1))+32 i-19\big)  
  \end{split}  
\end{equation}

Using these recurrence equations we could finally compute the limits. We computed the limits both by standard methods on limits of recurrence relations and by employing \emph{Mathematica} and both methods gave the same answers to the effect that all simultaneous EXP-TIME and EXP-SPACE TMs in (3,2) space all have fractal dimension $\frac{3}{2}$.

\subsection{The space-time theorem revisited}
\label{sec:space-time-theorem}

One of our most important empirical findings is that the upper bound as given by the Space-Time Theorem is actually always attained in (3,2) space. Moreover we found two related empirical facts for (3,2) space. We mention them in this section.

\begin{quote}{\bf Finding 0.} For all TMs $\tau$ in (3,2) space we found that $d(\tau) \geq  1$. More in particular, we found that for each TM $\tau$ in (3,2) space which performed in super-linear time we have
\[
\lim_{x\to \infty} \frac{s(\tau, x)}{t(\tau, x)} = 0.
\]
and we conjecture that this holds in general for TMs with a larger number of states. 
\end{quote}

\begin{quote}{\bf Finding 1.} For all TMs $\tau$ in (3,2) space we found that
\[
d(\tau) =  1 +\liminf_{x \to  \infty} \  \frac{\log(s_\tau(x))}{\log(t_\tau(x))}
\]
and we conjectured in the Upper Bound Conjecture (\ref{conjecture:UpperBoundConjecture}) that this holds in general for TMs with a larger number of states. 
\end{quote}

In Proposition \ref{theorem:equivalenceUpperBoundConjecture} we saw that a sufficient condition for the Upper Bound Conjecture to hold is that $\liminf_{x\to \infty} \frac{\log (N_\tau(x))}{\log (s_\tau(x)t_\tau (x))} = 1$ but it is not known if this is also a necessary condition. The following finding is related to this.

\begin{quote}{\bf Finding 2.} For all TMs $\tau$ in (3,2) space we found that
\[
\liminf_{x\to \infty} \frac{\log (N_\tau(x))}{\log (s_\tau(x)t_\tau (x))} = 1.
\]
\end{quote}

In Lemma \ref{theorem:UpperBoundProof} it was shown that $\lim_{x\to \infty} \frac{N_\tau(x)}{s_\tau (x) \cdot t_\tau (x)} \neq 0$ is a sufficient condition for the Upper Bound Conjecture to hold but it is not known if it is also necessary. The following finding is related to this.

\begin{quote}{\bf Finding 3.} For all TMs $\tau$ in (3,2) space we found that
\[
\lim_{x\to \infty} \frac{N_\tau(x)}{s_\tau (x) \cdot t_\tau (x)} \in (0,1]
\]
if this limit was well defined. Thus, in particular, we found that $\lim_{x\to \infty} \frac{N_\tau(x)}{s_\tau (x) \cdot t_\tau (x)} \neq 0$. Moreover, we found that only a limited amount of numbers were attained as limits of this quotient. The values found in $(3,2)$ for $\lim_{x\to\infty}\frac{N_\tau(x)}{s_\tau(x)
  t_\tau(x)}$ are:
\[
\frac{1}{9},\frac{1}{6},\frac{7}{30},\frac{1}{4},
\frac{5}{18},\frac{5}{16},\frac{1}{3},\frac{3}{8},\frac{8}{21}, 
\frac{7}{18},\frac{5}{12},\frac{3}{7},\frac{4}{9},\frac{7}{15},
\frac{1}{2},\frac{5}{9},\frac{9}{16},\frac{2}{3},\frac{3}{4},\frac{7}{9},1
\]
It is possible that a few other limit values exist but were not found
by the way we computed the functions generating $N_\tau(x)$.
\end{quote}

\begin{figure}[htbp!]
  \centering
  \includegraphics[height=12cm, angle=90]{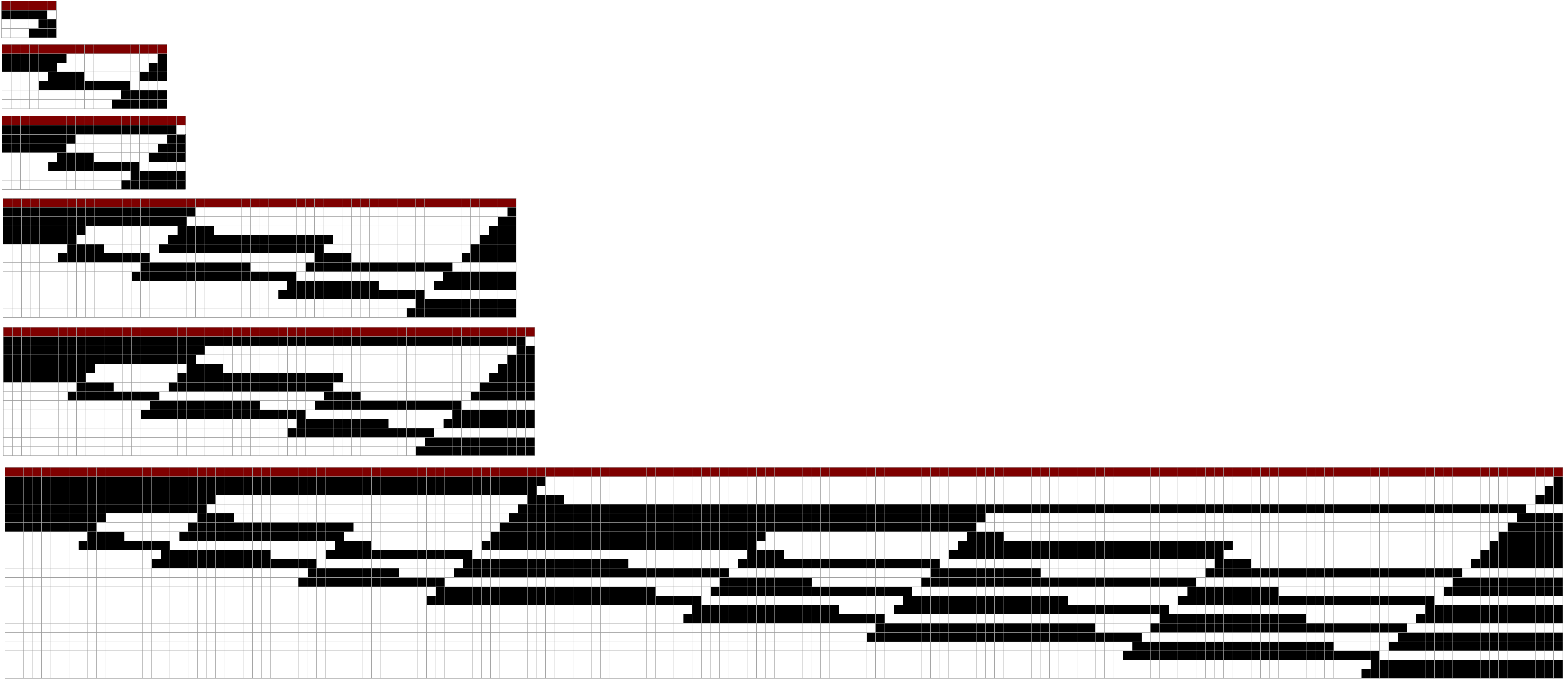}
  \caption{Execution of machine 582\,263 on the first six inputs and their corresponding space-time diagrams}
  \label{fig:relBoxesSTExec}
\end{figure}

%\begin{figure}[htbp!]
%  \centering
%  \includegraphics[width=8cm]{convergenceBoxesSR}
%  \caption{Relation of boxes with space $\times$ time}
%  \label{fig:relBoxesST}
%\end{figure}

For two of the exp-space performers we couldn't find the boxes
function. These TMs were 
582\,263 (and its twin machine), whose execution for inputs 1 to 6 is
shown in Figure~\ref{fig:relBoxesSTExec}. This TM possesses alternating
behavior with periodicity two. For these two machines we used Lemma \eqref{lemma:ConstantRatio} to settle the computation of $d(\tau)$.

For the sequences of even number of consecutive black input cells we found that the fraction 
$\frac{N(x)}{s_\tau (x) \cdot t_\tau (x)}$ tended to 0.31 whereas for the odd number of consecutive black input cells we saw it tended to 0.11. The exact value of the fraction is of course irrelevant in the computation of the limit that determines $d(\tau)$.

\begin{quote}{\bf Finding 4.} For all TMs $\tau$ in (3,2) space we found that $d(\tau) =  1$ if and only if the TM ran in super-polynomial time using polynomial space. We suspect that this equivalence holds no longer true in higher spaces, i.e., spaces $(n,2)$ for $n>3$.
\end{quote}

\begin{quote}{\bf Finding 5.} For all TMs $\tau$ in (3,2) space we found that $d(\tau) =  2$ if and only if the TM ran in at most linear time. It is unknown if this equivalence holds true in higher spaces. Note that the if part holds in general and is proven in Lemma \ref{theorem:LinearTimeTMsHaveDimension2}.
\end{quote}

\subsection{Richness in the microcosmos of small Turing machines}

The authors have explored the space of small Turing machines before. On occasion they have been so much impressed by the rich structures present there that they came to speak of \emph{the microcosmos of small Turing machines}. For this paper we had to mine (3,2) space even further and at some point were surprised to be surprised once more.

\begin{figure}[htbp!]
  \centering
  \includegraphics[width=8cm,angle =-90]{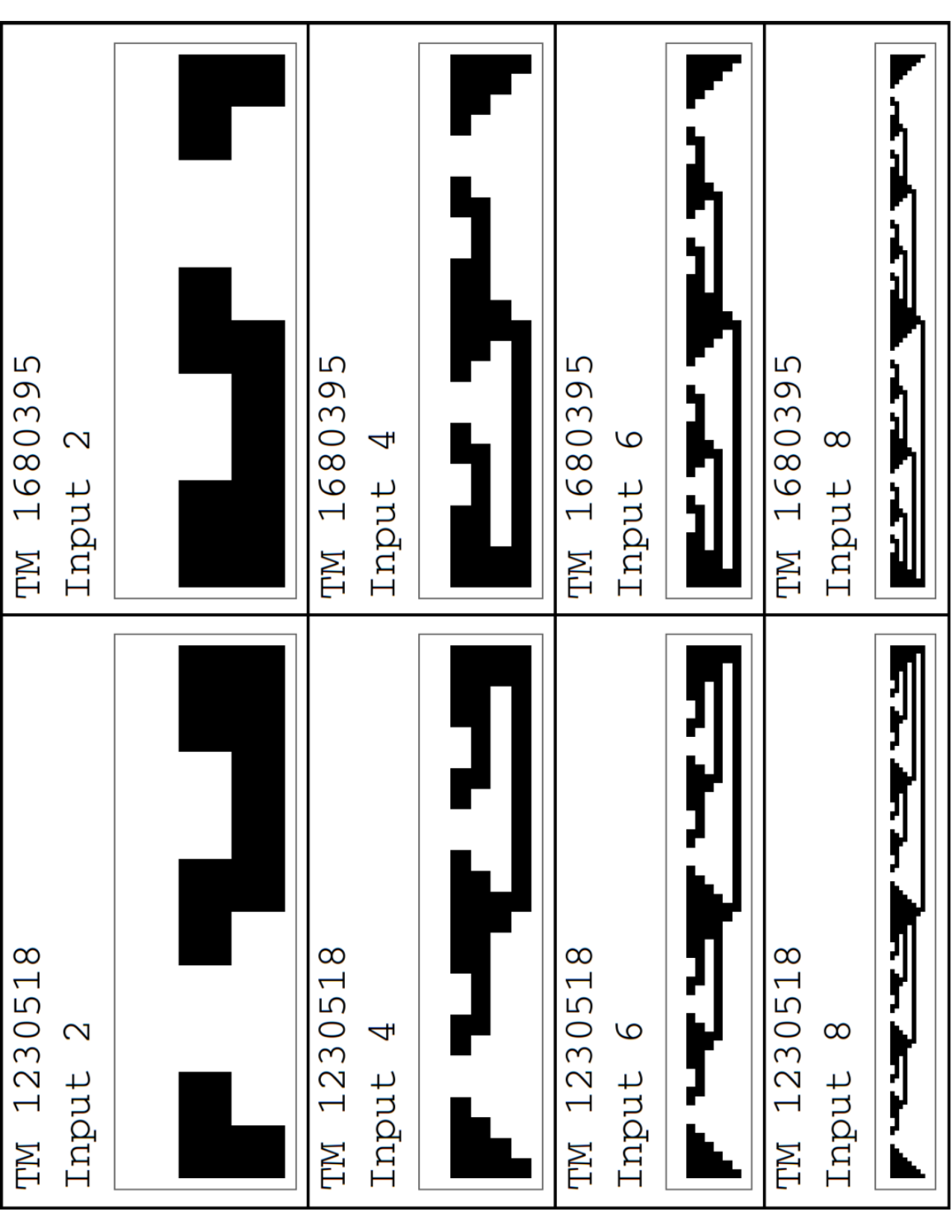}
  \caption{Symmetric performers}
  \label{fig:SymmetricPerformers}
\end{figure}

In particular, Figure \ref{fig:SymmetricPerformers} shows a very curious phenomenon that we call \emph{symmetric performers}. There turns out to be a pair of different TMs so that the space-time diagram on every even input of the one machine is the exact symmetric image of the space-time diagram of the other TM on the same input.

Of course, this can only happen in case the TM computes the tape-identity since the input must equal the output in order to yield a symmetric image. At first, one might be tempted to think that this phenomenon is bound to occur since we can define for each TM $\tau$ its reversed machine $\hat \tau$: replace each instruction $\langle {\sf color}, {\sf state}\rangle \mapsto \langle {\sf color}', {\sf state}', {\sf direction}\rangle$ by its canonical reversal  
\[
\langle {\sf color}', {\sf state}'\rangle \mapsto \langle {\sf color}, {\sf state}, \overline{{\sf direction}}\rangle
\] 
where $\overline{{\sf direction}}$ changes ${\sf right}$ to $\sf left$ and vice-versa. However, note that both machines start in State 1 so that this imposes already a strong condition on possible solutions of symmetric performers.

Let us denote by $\tau$ and $\tilde \tau$ a pair of symmetric performers. It is clear that if a TM $\tau$ terminates on input $x$ it does so in an even number of steps: for each computation where the head moves one to the left (the end of the tape is on the right by our convention), there must be a step where the machine moves one step to the right. In particular, for symmetric performers that terminate in $2n$ many steps on input $x$, we have that if the tape configuration at step $m$ differs from the tape configuration at step $m+1$ in $\tau (x)$, then the head position in step $m$ on $\tau (x)$ is the same as the head position in step $2n-(m+1)$ on $\tilde \tau (x)$.

Indeed it comes as a surprise that all these constraints can be met in (3,2) space, if only just for the even inputs. 
\newpage
%% This creates a non-numbered section. You can use a larger font with
%% \Large or even \LARGE to distinguish from sections. 
\section*{{\LARGE Part III: A brief literature survey}}
%% Send to the table of contents. You can also magnify the font. 
\addcontentsline{toc}{section}{Part III}

In the third and final part of the paper, we will try to locate our results within the landscape of known theoretical results that link fractal dimensions to other notions of complexity.

\section{Relations between fractal dimensions and other notions of complexity: An incomplete survey of the literature}\label{section:LiteratureSurvey}

In this paper we have worked with a variant of box-counting dimension and with space and time complexity for processes implemented on Turing machines. These are just some out of a myriad of different complexity measures in the literature. Since eventually the notion of being complex or not is relative to a framework and the ultimate framework in which all these complexity notions can be embedded in is our own cognitive system, on philosophical grounds one can expect relations between the various \emph{a priori} unrelated complexity notions (see \cite{Joosten:2013:NecessityOfComplexity, Joosten:2013:ComplexityFitsTheFittest}). And, indeed, in the literature we find various relations between different notions of complexity.

In this final section we wish to place our results in the context of other results in the literature that link different complexity notions. Our point of departure will be  fractal dimensions and possible relations to complexity notions of a computational nature.

The current section is neither self-contained nor do we pretend to give an exhaustive overview of the literature. Rather, we will try to provide sufficient pointers so that this section at least can serve as a point of departure for a more exhaustive and self-contained study.

\subsection{Box-counting dimension within the landscape of topological and fractal dimensions}\label{section:Dimensions}

In this paper we decided to work with a variant of box-counting dimension since this has many desirable computational properties and applications. Let us first see where box-counting dimension fits into the landscape of various versions of fractal and other dimensions.

Edgar divides geometrical dimensions in two main groups, \emph{topological} and \emph{fractal} dimensions (see \cite{Edgar:1990:MeasureTopologyFractalGeometry}). Topological dimensions are invariants of topological spaces in that they are invariant under homeomorphisms. Moreover, topological dimensions have integer values although some versions allow transfinite (ordinal) values too.

The most basic of all topological dimensions is the so-called \emph{cover dimension} also called \emph{Lebesgue dimension}. In order to describe this dimension we need some additional notions.
  
The \emph{order} of a family $\mathcal A$ of sets is $\leq n$ by definition when any $n+2$ of the sets have empty intersection. We denote this by $o (\mathcal A) \leq n$. We say that $o (\mathcal A)=n$ when $o (\mathcal A) \leq n$ but not $o (\mathcal A) \leq n-1$. Thus, for example, if any two sets in $\mathcal A$ have empty intersection the order of $\mathcal A$ is 0.

The \emph{cover dimension} of a set $S$ is $n$ --we write ${\sf Cov(S) = n}$-- whenever each open covering of $S$ has a refinement of order $n$. Thus, for example, a collection of two separate points in $\mathbb R^n$ has cover dimension 0 since we can separate the points by two disjoint opens. Likewise, any line-like space admits an open cover of order 1, that is, any intersection of three different opens is empty. Similarly we can cover a planar set by open tiles where each row of tiles is shifted to the right, say, w.r.t. the adjacent rows of tiles. This collection of tiles has order two since any collection of four different of such open tiles will be empty.
\medskip

Fractal dimensions on the other hand can have non-integer values. In a sense, the fractal dimension of some object $S$ is an indication of how close $S$ is to some integer-valued dimensional space. Dimension in integer-valued dimensional spaces in a sense express degrees of freedom and as such this provides us an information theoretical focus on dimension. More common is the geometrical focus on (fractal) dimension as for example expressed by Falconer (\cite{Falconer:2003:FractalGeometry}): ``Roughly, dimension indicates how much space a set occupies near to each of its points."

The most fundamental, and most common notion of fractal dimension is that of \emph{Hausdorff dimension} (\cite{hausdorff}) which was introduced already in 1919 building forth upon ideas of Carath\'eodory from 1914 (\cite{Caratheodory:1914}).
    
In order to relate our box-counting dimension to the more common Hausdorff dimension we will outline the definition and some basic properties of Hausdorff dimension.
	  
For $S$ a subset of some metric space we can consider countable open coverings $\mathcal A$ of $S$ and define
  \[
  {\mathcal H}^s_{\varepsilon} (S) : = \inf \sum_{A\in \mathcal A} ({\sf diam}\ A)^s.
  \]
Here, ${\sf diam}\ A$ denotes to the usual diameter of $A$ as the supremum of distances between any two points in $A$. The infimum is taken over all $\mathcal A$ that are countable open $\varepsilon$-covers of $S$. This means that the diameters of the open sets in our cover do not exceed $\varepsilon$. It is essential that we may take the diameters of the open sets in our cover to vary and in particular we can choose them as small as convenient. Next, we define
\[
{\mathcal H}^s (S) : =\lim_{{\varepsilon}\to 0}   {\mathcal H}^s_{\varepsilon} (S).
\]
The main theorem about these $  {\mathcal H}^s (S)$ is that there is a unique $s$ so that 
  
  				\begin{itemize}
				\item ${\mathcal H}^t_{\varepsilon} (S) = \infty$ for $t<s$;
				\item ${\mathcal H}^t_{\varepsilon} (S) = 0$ for $t>s$.
				\end{itemize}
				
This unique $s$ is called the Hausdorff dimension of $S$: ${\sf dim_H}(F)$.
As mentioned, this dimension was introduced in 1919 by Hausdorff (\cite{hausdorff}) and the main theory was later developed mainly by Besicovitch and his students \cite{Besicovitch:1929,BesocovitchPart2,BesocovitchPart3,besicovitchPartIV, besicovitchPartV} so that (\cite{hausdorff}) Mandelbrot often speaks of \emph{Hausdorff-Besicovitch dimension}.

The Hausdorff dimension comes with a natural dual dimension called \emph{packing dimension}. Although the notion of packing dimension is natural and related to the Hausdorff dimension it was only introduced about sixty years later by Tricot (\cite{Tricot:1982}) and Sullivan (\cite{Sullivan:1984}).

The main idea behind packing dimension of some spatio-temporal object $F$ is to somehow measure the volume of disjoint balls one can find so that the center of these balls lie within $F$. As with the case of Hausdorff dimension one parametrizes this concept with the target dimension $s$:
  \[
  {\mathcal P}^s_\delta(F):=
 \{ \sup \sum_{i} |B_i| \mid \{ B_i\}_i \mbox{ are disjoint balls at radii $\leq \delta$ and center in $F$}\}
  \]
Since $\lim_{\delta \to 0} {\mathcal P}^s_\delta(F)$ is not a measure (this is easy to see by considering countable dense sets of some $F$ with positive dimension)
%  Falconer P51
 one applies a standard trick which transforms this into a measure by defining
  \[
  {\mathcal P}^s (F) := \inf_{\{F_i\}_i} \{  \sum_i \lim_{\delta \to 0} {\mathcal P}^s_\delta(F_i) \mid F \subseteq \bigcup_{i=1}^\infty F_i\}. 
  \]
Here the infimum is taken over countable collections of sets $F_i$  so that 
$F \subseteq \bigcup_{i=1}^\infty F_i$. The main theorem of this notion ${\mathcal P}^s (F)$ shows that Packing dimension is in a sense dual to Hausdorff dimension: There is a unique $s$ so that 
  				\begin{itemize}
\item  ${\mathcal P}^t (F) = 0$ for $t<s$;
\item $ {\mathcal P}^t (F)= \infty$ for $t>s$.
				\end{itemize}
				
This unique $s$ is called the packing dimension of $F$ and we write ${\sf dim_P}(F)$. It is not hard to see that Packing dimension is an upper bound to Hausdorff dimension, that is, ${\sf dim_H}(F) \leq {\sf dim_P}(F)$.
%  Exercise 6.8.5 of Edgar.
\medskip
	
A fundamental property that is not hard to prove of the dimensions we have seen so far is that: ${\sf Cov}(F) \leq {\sf dim_H}(F)$. Mandelbrot defines a \emph{fractal} to be any set $F$ with ${\sf Cov}(F) < {\sf dim_H}(F)$. However, this notion of fractal is often considered (also by Mandelbrot himself) a notion of fractal that is too broad, since it admits ``true geometric chaos". J. Taylor proposes (see \cite{Taylor:1986}) to denote by fractals only Borel sets $F$ for which	${\sf dim_H}(F) = {\sf dim_P}(F)$. 
\medskip

We can now see how box-counting dimensions (or box dimensions for short) naturally fit the scheme of fractal dimensions we have seen above. In particular, the box dimension is like Hausdorff dimension only that we now cover by balls/boxes of \emph{fixed size} rather than by ball of flexible size not exceeding some maximum value $\varepsilon$. 
    
 Alternatively and equivalently, in order to define the box dimension, we can divide space into a regular mesh with mesh-size $\delta$ and count how many cells $N_\delta(F)$ are hit by a set $F$. Then, we define ${\mathcal B}^s_\delta (F) : = N_\delta (F) \delta^s$ and ${\mathcal B}^s(F) : = \liminf_{\delta \to 0} N_\delta (F) \delta^s$.
  
 Again, there is a cut-off value $s_0$ so that ${\mathcal B}^s(F) = \infty$ for $s<s_0$ and ${\mathcal B}^s(F) = 0$ for $s>s_0$.
  This cut-off value is given by 
  \[
  \liminf_{\delta\to 0} \frac{\log (N_\delta (F))}{\log(1/\delta)}.
  \]
which is close to the notion we started out with in this paper in Definition \ref{definition:BoxDimension}. Inspired by this cut-off value, we define 
\[
{\sf \underline{dim}_B}: =   \displaystyle \liminf_{\delta\to 0} 
  \frac{\log (N_\delta (F))}{ \log(1/\delta)}
  \] and 
  \[
  {\sf \overline{dim}_B}: =   
  \displaystyle \limsup_{\delta\to 0} \frac{\log (N_\delta (F))}{\log(1/\delta)}.
  \]

In case ${\sf \underline{dim}_B}(F) = {\sf \overline{dim}_B}(F)$ we call this the box-counting dimension: ${\sf {dim}_B} (F)$ which now exactly coincides with Definition \ref{definition:BoxDimension}. One can easily show that box dimension always provides an upper bound to Hausdorff dimension. Moreover, box dimension has many desirable computational properties thereby being amenable for computer applications.
  
Notwithstanding the good computational behavior, box dimension has various undesirable mathematical properties:  in particular, a countable union of measure zero sets can have positive box dimension. For example, one can show that in $\mathbb R$ with the standard topology we have ${\sf dim_B} \{ 0, \frac{1}{2}, \frac{1}{3}, \frac{1}{4}, \ldots \} = \frac{1}{2}$ which is of course highly undesirable.
  
Mathematically, this undesirable properties can be impaired with the same trick that was applied to the packing dimension by defining \emph{modified box dimension} as
\[
  {\sf \underline{dim}_{MB}}(F) := \inf_{\{ F_i\}} \{ \sup_i {\sf \underline{dim}_B (F_i)} \mid F \subseteq \bigcup_{i=1}^\infty F_i \} \ \ \ \mbox{ and }
\]
  
  \[
  {\sf \overline{dim}_{MB}}(F) := \inf_{\{ F_i\}} \{ \sup_i {\sf \overline{dim}_B (F_i)} \mid F \subseteq \bigcup_{i=1}^\infty F_i \}
  \]

But, of course, by doing so, we would loose all the good computational properties. In general, we have that 
\[
{\sf dim_H}(F)\leq {\sf \underline{dim}_{MB}}(F)\leq {\sf \overline{dim}_{MB}}(F) = {\sf dim_P}(F) \leq {\sf \overline{dim}_B}(F)
\]
and it is known that none of the inequalities can be replaced by equalities.
%  Falconer 3.29
However, we note that under Taylor's definition of fractal, the first four dimensions collapse and modified box dimension is an equivalent of Hausdorff dimension and indeed the modified box-counting dimension is a natural quantity to consider.

Moreover, if $F$ has a lot of self-similarity, then modified box-counting dimension is actually equal to the plane box counting dimension:

%Prop 3.6 Falconer
\begin{proposition}
Let $F\subseteq \mathbb{R}$ be compact so that for any open set $V$ we have ${\sf \overline{dim}_B} (F) = {\sf \overline{dim}_B} (F\cap V)$, then ${\sf \overline{dim}_B} (F) = {\sf \overline{dim}_{MB}} (F)$.
\end{proposition}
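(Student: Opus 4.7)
The proof plan is to show the two inequalities ${\sf \overline{dim}_{MB}}(F)\leq {\sf \overline{dim}_B}(F)$ and ${\sf \overline{dim}_B}(F)\leq {\sf \overline{dim}_{MB}}(F)$ separately. The first inequality is immediate from the definition: taking the one-element cover $\{F\}$ in the infimum already gives ${\sf \overline{dim}_{MB}}(F)\leq {\sf \overline{dim}_B}(F)$. So the whole content lies in the reverse inequality.

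For the reverse inequality, I would take an arbitrary countable cover $F\subseteq \bigcup_{i=1}^\infty F_i$ and try to exhibit some index $i$ with ${\sf \overline{dim}_B}(F_i)\geq {\sf \overline{dim}_B}(F)$. First I would reduce to the case where each $F_i$ is closed: replacing $F_i$ by its closure does not change the upper box dimension (a standard fact, since the closure of a set is covered by the same $\delta$-mesh boxes as the set itself), and $\{\overline{F_i}\}$ still covers $F$. Intersecting with $F$, I obtain a countable cover $F=\bigcup_i (F\cap \overline{F_i})$ of $F$ by sets that are closed in the compact (hence complete) metric space $F$.

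Now the key ingredient is the Baire category theorem applied to $F$. Since $F$ is compact, it is a complete metric space and thus of second category in itself, so at least one $F\cap \overline{F_i}$ must have non-empty interior relative to $F$. This means there exists an open set $V\subseteq \mathbb{R}$ with $F\cap V\neq\emptyset$ and $F\cap V\subseteq \overline{F_i}$. By the self-similarity hypothesis (read as applying to open $V$ that meet $F$, which is the only way the statement can hold non-trivially) we then get
\[
{\sf \overline{dim}_B}(\overline{F_i}) \ \geq\ {\sf \overline{dim}_B}(F\cap V) \ =\ {\sf \overline{dim}_B}(F),
\]
using monotonicity of upper box dimension in the first step. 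Combined with the closure reduction, this gives $\sup_i {\sf \overline{dim}_B}(F_i)\geq {\sf \overline{dim}_B}(F)$. As the cover $\{F_i\}$ was arbitrary, taking the infimum yields ${\sf \overline{dim}_{MB}}(F)\geq {\sf \overline{dim}_B}(F)$, which closes the argument.

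The main obstacles I anticipate are bookkeeping rather than conceptual. One point to be careful about is the interpretation of the hypothesis for open sets $V$ that miss $F$ entirely; for such $V$ the equality ${\sf \overline{dim}_B}(F)={\sf \overline{dim}_B}(\emptyset)$ forces the trivial case ${\sf \overline{dim}_B}(F)=0$, which is anyway compatible with the conclusion, so the argument is robust either way. The other subtle point is justifying that replacing $F_i$ by $\overline{F_i}$ preserves upper box dimension; this is where compactness of $F$ is not actually needed, but one does need the standard observation $N_\delta(\overline{A})\leq N_{2\delta}(A)$ (or the identity $N_\delta(\overline{A})=N_\delta(A)$ when $\delta$-cubes are closed), which only affects the limit trivially. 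Everything else is a direct application of Baire's theorem and monotonicity of ${\sf \overline{dim}_B}$.
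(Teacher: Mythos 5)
Your argument is correct, and it is the standard proof of this fact (it is essentially Proposition 3.6 in Falconer's \emph{Fractal Geometry}): reduce to closed covers using invariance of upper box dimension under closure, apply the Baire category theorem in the complete metric space $F$ to find an $F_i$ whose closure contains a relatively open piece $F\cap V$, and invoke the hypothesis plus monotonicity. The paper itself states this proposition without proof, as part of its literature survey, so there is nothing to compare against; your reading of the hypothesis as ranging over open $V$ that meet $F$ is the intended one, and your handling of the degenerate case is fine.
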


So in various situations, box counting coincides with Hausdorff dimension. The most famous example is probably that this equality holds for the Mandelbrot set. In addition, there are various other situations where box-counting and Hausdorff dimension coincide (\cite{Staiger:1998,Staiger:2005}).

\subsection{Computability properties of fractals}\label{section:ComputabilityPropertiesJuliaSets}

As a first link between fractals and computability properties we want to mention that of various fractal objects one has studied the computational complexity.

Probably the most famous examples of fractals are Julia sets and the corresponding `roadmap Mandelbrot set'.
  Let us briefly recall some basic definitions.
  By ${\sf FJ}(f)$ we denote the \emph{filled Julia set} of a function $f$ defined on the complex numbers. This set ${\sf FJ}(f)$ is defined as the set of values $z$ in the domain of $f$ on which iterating $f$ on $z$ does not diverge. That is,
  \[
{\sf FJ}(f) := \{ z \mid  \limsup_{n\to \infty} |f^n(z)|< \infty \}.
  \]
By $J(f)$ --the \emph{Julia set of $f$}-- we denote the boundary of ${\sf FJ}(f)$.
Following C.T. Chong (\cite{ChongSlides:2009}), we can consider $f_\theta (z) = z^2 + \lambda z$ with $\lambda = e^{2\pi i \theta}$ and $\theta \notin \mathbb Q$. Using this notation, the corresponding Julia sets are denoted by $J_\theta$.

One can express that $J_\theta$ is well-behaved by saying that it has a \emph{Siegel disk} at $z=0$. Basically, this says that $f$ is locally linearizable at $z=0$ by a rotation and we refer the reader to e.g.\ \cite{Milnor:2006} for further details.

The \emph{Turing degree} of a set is an indication of how complicated that set is. A set of natural numbers $A$ is of Turing degree at most that of $B$ --we write $A\leq_TB$-- if the question about $x\in A$ can be decided on an idealized computer using various queries of the form $y\in B$. We say that two sets $A$ and $B$ have the same Turing degree --we write $A\equiv_TB$-- whenever both $A\leq_T B$ and $B\leq_T A$.

Likewise, we say that a set $B$ is \emph{computably enumerable} --or c.e.\ for short-- in $A$ if we can, using an idealized computer, enumerate all the elements of $B$ using queries about $A$. Note that enumerability of $B$ does not give a procedure to decide membership. It only guarantees you that if some element belongs to $B$, then at some stage it will be enumerated in the enumeration.

We call a set $A$ \emph{recursive}, \emph{computable} or simply \emph{decidable} if we can decide with an idealized computer without any oracles whether $x\in A$ or not for any $x\in \mathbb N$. Likewise, we call a set $A$ simply c.e.\ when it is c.e.\ in the empty set $\varnothing$.

The Turing degree of a set --the equivalence class under $\equiv_T$ so to say-- is a robust notion in various ways. For example, it makes sense to speak of `being c.e.\ in the degree of $A$' whence we will often refrain from distinguishing $A$ from its corresponding degree.

We can conceive a real number as a set of natural numbers. Let us restrict ourselves to the real interval $[0,1]$. Then we can conceive any real number $a$ in this interval as a set by looking at the binary expansion of $a$ and using this string $0,a_0a_1a_2\ldots$ to define a set $A$ where $i \in A$ iff $a_i =1$.
So by this identification it makes sense to speak of the Turing degree of a real number.

We will shortly discuss that one can set up real analysis in such a way that it also makes sense to speak about the Turing degree of non-discrete objects like $J_\theta$. Braverman and Yampolsky follow in \cite{BravermanYampolsky:2009:ComputabilityJuliaSets} an approach of what is called \emph{Constructive Analysis} as initiated by Banach and Mazur (\cite{BanachMazur:1937}), with influences of Markov (\cite{Markov:1962}). The main idea behind this constructive analysis is that we can conceive continuous objects as entities that we can computably approximate to the precision that we require (see e.g.\ \cite{weihrauch2000computable} for an overview).

Braverman and Yampolsky have studied (see \cite{BravermanYampolsky:2009:ComputabilityJuliaSets}) the relations between the Turing degree of $\theta$ and that of $J_\theta$. In particular they prove that $\mathbf b$ is a c.e.\ Turing degree if and only if it is the degree of $J_\theta$ with $\theta$ recursive so that $J_\theta$ has a Siegel disk.

C.T. Chong has generalized this result (\cite{ChongSlides:2009}): Let $\mathbf c$ be a Turing degree. For every ${\mathbf d} \geq {\mathbf c}$ we have that $\mathbf d$ is c.e.\ in $\mathbf c$ if and only if it is the degree of a Julia set $J_\theta$ with Siegel disk and ${\sf deg}(\theta)= \mathbf c$.
  
It is good to stress that all these results are sensitive to the underlying model of computation and real analysis and the results would change drastically if one were to switch to other models like the so-called Blum-Schub-Smale model (see \cite{BlumCuckerShubSmale:1998}).

The results presented in this subsection relate the Turing complexity of the fractal to the complexity of the parameter generating it. However there are no links from the Turing degrees of the Julia sets to the corresponding dimensions. In the next subsection we will discuss various results of this sort.

\subsection{Effective dimension and computations}\label{section:EffectiveDimensionAndComputations}

In this subsection we will present certain results that relate Hausdorff dimension to other notions of complexity. In order to do so we will first rephrase the notion of Hausdorff dimension in the setting of binary strings. Next, we shall shall define so-called effectivizations of Hausdorff dimension. It is these effectivizations which can be related to other notions of complexity. Again, this subsection will be far from self-contained. We refer the reader to \cite{DowneyHirschfeldt:2010} for further details. And actually the presentation here is largely based on this treatise (mainly Chapter 13) and we shall closely follow it in structure.
	  
Thus, let us reformulate the definition of Hausdorff dimension in the realm of binary sequences, i.e., in the realm of Cantor space which we shall denote by $2^\omega$. We shall interchangeably speak of sequences or of reals when we refer to elements of Cantor space. The collection of finite binary strings we shall denote by $2^{<\omega}$.

For $\sigma \in 2^{<\omega}$ we denote the length of $\sigma$ as $| \sigma |$. For $\sigma \in 2^{<\omega}$ we define $\lb \sigma \rb := \{ \sigma \tau \mid \tau \in 2^\omega \}$ where $\sigma \tau$ denotes just string concatenation. 
Whenever we shall consider Cantor space as a topological space, we shall consider the topology generated by the basic open sets of the form $\lb \sigma \rb$.
For $\Sigma \subseteq 2^{<\omega}$ we define $\lb \Sigma \rb := \bigcup_{\sigma\in \Sigma} \lb \sigma \rb$.

Thus, for any $R \subseteq 2^\omega$ we define an \emph{$n$-cover} of $R$ to be a set $\Sigma \subseteq 2^{\geq n}$ so that $R\subseteq \lb \Sigma \rb$. Cantor space can be endowed with a measure in the standard way by defining $\mu (\lb \sigma \rb) = 2^{-|\sigma|}$. Thus, in analogy to Section \ref{section:Dimensions} we now define:
\[
{\mathcal H}^s_n (R) := \inf \{ \sum_{\sigma \in \Sigma} 2^{-s |\sigma|} \mid \Sigma \mbox{ an $n$-cover of $R$}\}
\]
and ${\mathcal H}^s (R) := \displaystyle \lim_{n\to \infty} {\mathcal H}^s_n (R)$.
So, as before, we define ${\sf dim_H} (R) := \inf \{ s \mid {\mathcal H}^s (R) = 0 \}$. It is easy to see that for every $r\in [0,1]$ there is $R\subseteq 2^\omega$ with ${\sf dim_H}(R) = r$.

Within the context of Cantor space we shall now give a definition of what is called \emph{Effective Hausdorff dimension}. The effective pendant is defined via
\[
{\mathcal {EH}}^s_n (R) := \inf \{ \sum_{\sigma \in \Sigma} 2^{-s |\sigma|} \mid \Sigma \mbox{ a \emph{c.e.} $n$-cover of $R$}\}
\]
and ${\mathcal {EH}}^s (R) \ := \ \displaystyle \lim_{n\to \infty} {\mathcal EH}^s_n (R)$, so that the \emph{effective Hausdorff dimension} is defined as ${\sf dim_{EH}} (R) := \inf \{ s \mid {\mathcal {EH}}^s (R) = 0 \}$.
 
One can now show (\cite{Lutz:2003}) that for every computable real $r \in [0,1]$, there is a set $R\subseteq 2^\omega$ with ${\sf dim_{EH}}(R) = r$. By a theorem of Hitchcock's we have that for important subsets $F$ of Cantor space it holds that ${\sf dim_{H} (F)}= {\sf dim_{EH} (F)}$:
 
\begin{theorem}[Hitchcock \cite{Hitchcock:2005}] Let $F$ be a countable union of $\Pi^0_1$ classes\footnote{A subset $A$ of Cantor space is a $\Pi^0_1$ class if it is the collection of paths for some computable tree. An alternative definition requires that for some computable relation $R$ we have $A:= \{ \sigma \in 2^\omega \mid \forall n\ R(\sigma {\upharpoonright} n) \}$.} of Cantor space, then ${\sf dim_{H} (F)}= {\sf dim_{EH} (F)}$. 
\end{theorem}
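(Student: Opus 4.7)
The plan is to split the claim into two inequalities. The easier direction ${\sf dim_H}(F)\le {\sf dim_{EH}}(F)$ is immediate, since the infimum in $\mathcal{EH}^s_n$ is taken over a sub-collection (the c.e.\ $n$-covers) of the family ranged over in $\mathcal{H}^s_n$ and hence dominates it term by term; this yields $\mathcal{EH}^s \ge \mathcal{H}^s$, and therefore the inequality on dimensions. For the converse I would first reduce to the case of a single $\Pi^0_1$ class: Hausdorff dimension is unconditionally countably stable, and the effective version is countably stable over any \emph{uniformly computable} family of defining trees, which is the reading of ``countable union of $\Pi^0_1$ classes'' that matches the effective setting. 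So fix a computable tree $T$ and write $P=[T]$; the task becomes ${\sf dim_{EH}}(P)\le {\sf dim_H}(P)$.

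The main technical ingredient I would prove is an effective compactness lemma: for any \emph{finite} $\Sigma\subseteq 2^{<\omega}$ the predicate ``$P\subseteq \lb \Sigma \rb$'' is $\Sigma^0_1$. Indeed, $P\subseteq \lb \Sigma \rb$ iff $P\cap(2^\omega\setminus \lb \Sigma \rb)=\emptyset$; the complement of $\lb \Sigma \rb$ is clopen and represented by a finite computable tree $T_\Sigma$, so the intersection is the set of paths through the computable tree $T\cap T_\Sigma$, which by K\"onig's lemma is empty iff some finite level of this tree is empty, a property detected c.e.\ by enumerating levels. Using this, for each rational $s>{\sf dim_H}(P)$ and each $n,k\in\mathbb{N}$ I would run the following semi-decision procedure: enumerate all finite $\Sigma\subseteq 2^{\ge n}$ with $\sum_{\sigma\in\Sigma} 2^{-s|\sigma|}<2^{-k}$, and in parallel test each with the $\Sigma^0_1$ covering predicate, outputting the first $\Sigma$ that passes. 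Termination is guaranteed because $s>{\sf dim_H}(P)$ supplies an $n$-cover with $s$-mass below $2^{-k}$, and compactness of $P$ (a closed subset of $2^\omega$) lets us trim it to a finite such cover. The resulting finite, hence c.e., $\Sigma$ witnesses $\mathcal{EH}^s_n(P)<2^{-k}$; letting $n,k\to\infty$ yields $\mathcal{EH}^s(P)=0$, so ${\sf dim_{EH}}(P)\le s$, and sending $s$ down to ${\sf dim_H}(P)$ concludes.

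The part I expect to be the main obstacle is the reduction step at the outset. Classical Hausdorff dimension is countably stable without qualification, whereas the effective variant respects only uniformly c.e.\ unions, so the hypothesis ``countable union of $\Pi^0_1$ classes'' must be interpreted as supplying a uniform computable listing of the defining trees. If the theorem is stated without such uniformity one has to either argue that the relevant $F_\sigma$-classes in Cantor space admit uniform $\Pi^0_1$-decompositions under the natural effectivity assumptions, or adapt the definition of effective dimension so that it behaves well under non-uniform countable unions. Once the reduction is granted, the remaining argument is a clean combination of effective compactness of $\Pi^0_1$ classes with enumeration of finite cover candidates.
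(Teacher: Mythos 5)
This theorem is quoted in the paper's literature survey (Part III) as a cited result of Hitchcock; the paper gives no proof of it, so there is nothing internal to compare your argument against. Judged on its own, your argument for a single $\Pi^0_1$ class $P=[T]$ is correct and is in fact the standard proof of Hitchcock's correspondence principle: the easy inequality ${\sf dim_H}\leq{\sf dim_{EH}}$ from shrinking the family over which the infimum is taken, the $\Sigma^0_1$-ness of ``$P\subseteq\lb\Sigma\rb$'' for finite $\Sigma$ via K\"onig's lemma applied to the computable tree $T\cap T_\Sigma$, and the dovetailed search for a finite low-mass $n$-cover whose termination is guaranteed by $\mathcal H^s_n(P)=0$ together with compactness of $P$. (One cosmetic point: deciding $\sum_{\sigma\in\Sigma}2^{-s|\sigma|}<2^{-k}$ for rational $s$ should be replaced by enumerating pairs $(\Sigma,j)$ where the $j$-th rational approximation certifies the strict inequality, but this is routine.)

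The genuine gap is exactly where you suspect it: the reduction from a countable union to a single $\Pi^0_1$ class. You resolve it only by reinterpreting the hypothesis as a \emph{uniform} union of $\Pi^0_1$ classes, which weakens the statement. The theorem as stated holds for arbitrary countable unions, and the missing ingredient is Lutz's absolute stability of effective (constructive) Hausdorff dimension: ${\sf dim_{EH}}(X)=\sup_{A\in X}{\sf dim_{EH}}(\{A\})$ for \emph{every} $X\subseteq 2^\omega$, with no uniformity or even countability assumption. This is not a formal triviality; it rests on the existence of an optimal constructive supergale (equivalently, on the universal lower semicomputable semimeasure, or on Mayordomo's pointwise formula ${\sf dim_{EH}}(\{A\})=\liminf_n K(A\upharpoonright n)/n$ quoted later in the same section of the paper). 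Granting absolute stability, one gets ${\sf dim_{EH}}(\bigcup_iF_i)=\sup_i{\sf dim_{EH}}(F_i)$ for any countable union, classical countable stability gives the same identity for ${\sf dim_H}$, and your single-class argument finishes the proof. So your proof is complete modulo citing or proving that stability result; as written, the fallback options you list in your final paragraph are not needed.
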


In the same paper Hitchcock also proves an equality for $\Sigma^0_2$ classes and \emph{computable} \emph{ Hausdorff dimension} (covers are required to be \emph{computable} rather than c.e.). So, for some objects Hausdorff dimension and effective Hausdorff dimension coincide.

However, for other important classes they differ. In particular we have that the Hausdorff dimension of any sequence in Cantor space equals zero. However, there may be no simple effective covers around so that a single sequence can have positive effective Hausdorff dimension.

There is a link between Turing degrees and effective Hausdorff dimension albeit this link is not very straight-forward.
Recall that for $A\in 2^\omega$ we have ${\sf dim_H}(A) = 0$ but that we can have ${\sf dim_{EH}}(A) > 0$ when no simple effective covers are around.

Thus, in a sense, having non-zero effective Hausdorff dimension is an indication of containing complexity. And in fact it can be shown that if $A \in  2^\omega$ with ${\sf dim_{EH}}(A)>0$, then $A$ can compute a non-recursive function. To be more precise, $A$ can compute a fix-point free function $f$ (that is, a function $f$ so that $W_{f(e)}\neq W_e$ for all numbers $e$) by results of Terwijn, Jockush, Lerman, Soare and Solovay (\cite{Terwijn:2004,JockushLermanSoareSolovay:1989}).
  
%  \item Terwijn, S.A. \emph{Complexity and Randomness} Rendiconti del Seminario Matematico di Torino, 62:1--38, 2004.
%%  For the part that $A$ can compute a DNC function, and 
%
%\item Jockush Jr., C.G., Lerman, M., Soare, R. I., and Solovay, R.M. \emph{Recursively enumerable sets modulo iterated jumps and Arslanov's completeness criterion.} Journal of Symbolic Logic, 54:1288--1323, 1989.
%% for the part the computing a DNC function is equivalent to computing a fp free function.

This result establishes a relation between effective Hausdorff dimension and computational complexity in the guise of degrees of undecidability. However, the relation between effective dimension and computable content is not monotone nor simple. In particular, one can show that if ${\sf dim_{EH}}(A)=\alpha$, then there exist sets $B$ of arbitrary high Turing degree with  ${\sf dim_{EH}}(B)=\alpha$. However locally, Hausdorff dimension can provide an upper bound to Turing degrees:
  
\begin{theorem}[Miller \cite{miller2011extracting}]\label{theorem:Miller} 
Let $r$ be a left-c.e.\ real\footnote{We omit the technical details here and refer to \cite{DowneyHirschfeldt:2010} for them. However, one can think of a left-c.e.\ real as a c.e.\ real that does converge but for which we cannot computably estimate the rate of convergence.}. There is a $\Delta^0_2$-definable set $R\in  2^\omega$ with ${\sf dim_{EH}}(R) = r$ so that moreover 
\[
  A \leq_T  R \ \ \Rightarrow \ \ {\sf dim_{EH}}(A) \leq r.
\] 
\end{theorem}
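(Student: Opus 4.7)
The plan is to exploit the Levin--Schnorr style characterization of effective Hausdorff dimension via prefix-free Kolmogorov complexity, namely ${\sf dim_{EH}}(A) = \liminf_{n\to\infty} K(A {\upharpoonright} n)/n$, and to construct $R$ by a $\Delta^0_2$ approximation that enforces compressibility both of $R$ itself and of every Turing reduction computable from $R$. Since $r$ is left-c.e., fix a computable nondecreasing sequence of rationals $r_s \uparrow r$, and invoke the limit lemma to build $R$ as $R(n) = \lim_s R_s(n)$ along a computable approximation $(R_s)_s$ with only finitely many changes at each position.

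The first subgoal is the upper bound ${\sf dim_{EH}}(R) \leq r$. Using a Kraft--Chaitin style enumeration of short descriptions (triggered by the stagewise approximation to $r$), I would guarantee that $K(R {\upharpoonright} n) \leq r\cdot n + O(\log n)$ for every $n$. To get the matching lower bound --- and hence equality --- I would simultaneously diagonalize against every c.e.\ witness to a strictly smaller dimension, using $r_s$ as a moving target: any candidate cover purporting to show ${\sf dim_{EH}}(R) < r - \varepsilon$ will eventually be defeated because $r_s$ exceeds $r - \varepsilon$ from some stage onward, allowing one to preserve enough incompressible bits in $R$.

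The heart of the argument is the preservation clause: for every Turing functional $\Phi_e$, if $\Phi_e^R$ is total then ${\sf dim_{EH}}(\Phi_e^R) \leq r$. The naive description of $\Phi_e^R {\upharpoonright} n$ by the short code for $R {\upharpoonright} \phi_e(n)$ together with the index $e$ only yields $K(\Phi_e^R {\upharpoonright} n) \leq r\cdot \phi_e(n) + O(\log n)$, and $\phi_e(n)$ may grow far faster than $n/r$. The remedy is a priority construction in which, whenever a short code is committed for an initial segment of $R$, a companion description of length roughly $r\cdot n$ is simultaneously registered for $\Phi_e^R {\upharpoonright} n$ at every length $n$ for which $\phi_e(n)$ is bounded by the $R$-length in question. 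The finite-injury character of the approximation permits revising these commitments whenever $\Phi_e$'s behaviour on the current $R_s$ changes.

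The main obstacle will be the simultaneous management of these conflicting requirements across all functionals: a lower-priority requirement designed to push ${\sf dim_{EH}}(R)$ up to $r$ can conflict with higher-priority obligations to keep every $\Phi_e^R$ compressible, and changing a single bit of $R$ can invalidate many previously issued descriptions for many reductions at once. Balancing these demands requires exactly the cost-function / $K$-trivial machinery that underpins Miller's original proof: one verifies that the total cost of injuries is summable, so each requirement is eventually satisfied, while the description weights enumerated into the universal prefix-free machine stay bounded. The $\Delta^0_2$ character of $R$ is then immediate from the finite-injury bookkeeping, and the two dimension bounds combine to give ${\sf dim_{EH}}(R) = r$ together with the stated preservation under Turing reducibility.
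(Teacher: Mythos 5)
First, a point of reference: the paper does not prove this statement at all. Theorem \ref{theorem:Miller} appears in the literature survey (Part III) purely as a cited result of Miller \cite{Miller:2014}, stated to situate the authors' own work; there is no proof in the paper to compare against (and note the statement as printed even contains a typo, $\alpha$ for $r$, inherited from the survey context). So any evaluation must be of your proposal on its own merits as a reconstruction of Miller's theorem.

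On those merits, what you have written is a plan rather than a proof, and the plan defers precisely the step that constitutes the theorem. The upper bound ${\sf dim_{EH}}(R)\leq r$ and the lower bound via diagonalization are routine; the entire difficulty is the preservation clause, and there your proposal says only that one should ``register a companion description of length roughly $r\cdot n$ for $\Phi_e^R\upharpoonright n$'' and that ``one verifies that the total cost of injuries is summable.'' That verification is the theorem. Each time the $\Delta^0_2$ approximation to $R$ changes a bit, the strings $\Phi_e^R\upharpoonright n$ can change for all $e$ and all $n$ whose use exceeds the changed position, forcing you to re-issue descriptions whose accumulated weight you have no a priori bound on: the number of changes at each position is finite but not computably bounded, and long outputs depend on long, frequently-changing initial segments. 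Nothing in your sketch explains why the Kraft--Chaitin weight stays below $1$, and the existence of reals of dimension $1/2$ that compute reals of dimension $1$ shows that merely keeping $R$ compressible cannot suffice --- $R$ must be chosen very specially. Invoking ``cost-function / $K$-trivial machinery'' does not fill this gap, and it also misdescribes Miller's actual argument, which is not a finite-injury priority construction but a compactness-style argument over sets of strings of bounded complexity, whose technical core is a combinatorial lemma bounding the weight of the covers that could compress the images $\Phi_e^R$. As it stands, the proposal identifies the obstacle correctly but does not overcome it.
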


It is exactly this kind of results that we are interested here in this section: theorems that relate different notions of complexity. Another classical result links Kolmogorov complexity to effective Hausdorff dimension. Let us briefly and loosely define Kolmogorov complexity referring to e.g.\ \cite{LiVitanyi:book} for further details.
	  
For a string $s\in 2^{<\omega}$ the Kolmogorov complexity $K(s)$ is roughly the length of the shortest program that outputs $s$ when computed on a particular universal Turing machine. Of course this is dependent on a particular choice of a universal Turing machine, but different choices of a universal Turing machine only manifest itself in an additive constant in $K$. The relation between Kolmogorov complexity and effective Hausdorff dimension is given by a theorem of Mayordomo:
	
\begin{theorem}[Mayordomo \cite{Mayordomo:2002}]\label{theorem:Mayordomo}
Let $A$ be a sequence in	Cantor space and let $A{\upharpoonright} n$ denote the first $n$ bits of this sequence. 
\[
{\sf dim_{EH}}(A) = \liminf_{n\to \infty} \frac{K(A{\upharpoonright} n)}{n}.
\]  
\end{theorem}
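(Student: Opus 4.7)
The plan is to prove the two inequalities separately, treating $K$ as the prefix-free Kolmogorov complexity (which satisfies Kraft's inequality $\sum_{\sigma} 2^{-K(\sigma)} \leq 1$) and exploiting the enumerability of the relations involved. Throughout I will pass to rational parameters $s$ to stay inside the c.e.\ setting.

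For the upper bound $\mathsf{dim}_{EH}(A) \leq \liminf_{n\to\infty} K(A\!\upharpoonright\! n)/n$, I would fix any rational $s$ strictly greater than the $\liminf$ and construct, for each $m$, the set $\Sigma_m := \{\sigma \in 2^{<\omega} : |\sigma|\geq m \text{ and } K(\sigma) < s|\sigma|\}$. Since $K$ is upper semicomputable, each $\Sigma_m$ is c.e. Because $s$ exceeds the $\liminf$, there are infinitely many $n$ with $K(A\!\upharpoonright\! n) < sn$, so $\Sigma_m$ is an $m$-cover of $\{A\}$. The key estimate is that for $\sigma\in\Sigma_m$ we have $2^{-s|\sigma|} < 2^{-K(\sigma)}$, whence
\[
\sum_{\sigma\in\Sigma_m} 2^{-s|\sigma|} \ < \ \sum_{|\sigma|\geq m} 2^{-K(\sigma)},
\]
and Kraft's inequality forces the right-hand side to tend to $0$ as $m\to\infty$. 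Therefore $\mathcal{EH}^s(\{A\})=0$ and $\mathsf{dim}_{EH}(A)\leq s$; letting $s$ decrease to the $\liminf$ gives the desired bound.

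For the lower bound $\mathsf{dim}_{EH}(A) \geq \liminf_{n\to\infty} K(A\!\upharpoonright\! n)/n$, I would fix a rational $s > \mathsf{dim}_{EH}(A)$ and exploit the Kraft--Chaitin theorem (machine existence theorem) to turn small c.e.\ covers into short prefix-free codes. Concretely, by the definition of $\mathcal{EH}^s$ we can find, for every $k$, a c.e.\ $n_k$-cover $\Sigma^{(k)}$ of $\{A\}$ with $\sum_{\sigma\in\Sigma^{(k)}} 2^{-s|\sigma|} \leq 2^{-k}$ and $n_k\to\infty$. The sequence of requests $(\lceil s|\sigma|\rceil + k + c, \sigma)$ over all $k$ and $\sigma\in\Sigma^{(k)}$ has total weight $\leq \sum_k 2^{-k-c}\sum_\sigma 2^{-s|\sigma|}\leq 2^{-c}\cdot 2$, which is $\leq 1$ for an appropriate constant $c$; Kraft--Chaitin then yields a prefix-free machine assigning each such $\sigma$ a code of length $\lceil s|\sigma|\rceil + k + c$. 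Since each $\Sigma^{(k)}$ contains some prefix $A\!\upharpoonright\! m_k$ with $m_k\geq n_k$, we obtain $K(A\!\upharpoonright\! m_k) \leq s\, m_k + k + O(1)$, and dividing by $m_k$ and letting $k\to\infty$ forces $\liminf_n K(A\!\upharpoonright\! n)/n \leq s$.

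The main obstacle I expect is the lower-bound direction: the naive strategy of applying Kraft--Chaitin to a single cover only guarantees $K(\sigma)\leq s|\sigma|+O(1)$ on that cover, which is not obviously uniform in $s\downarrow \mathsf{dim}_{EH}(A)$. The remedy is the bookkeeping above, in which the weight $2^{-k}$ on the $k$-th cover pays for the additional $+k$ bits needed to describe which machine/cover is being used, while still leaving the total Kraft weight summable. A smaller subtlety is that the theorem as stated uses plain Hausdorff dimension with c.e.\ covers, so one must be careful to distinguish \emph{effective} Hausdorff dimension from \emph{computable} Hausdorff dimension and from constructive dimension; the argument above works uniformly at the c.e.\ level because semicomputability of $K$ matches semicomputability (enumerability) of covers.
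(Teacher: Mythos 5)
The paper offers no proof of this statement---it is quoted from Mayordomo in the literature-survey section---so there is nothing internal to compare against; what follows judges your proposal on its own terms.

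Your upper-bound direction is correct and is the standard argument: the sets $\Sigma_m=\{\sigma: |\sigma|\geq m,\ K(\sigma)<s|\sigma|\}$ are uniformly c.e.\ by upper semicomputability of $K$, each contains a prefix of $A$ of length $\geq m$ because $s$ exceeds the liminf, and their $s$-weights are dominated by the tails of $\sum_\sigma 2^{-K(\sigma)}\leq 1$, which vanish as $m\to\infty$.

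The lower-bound direction has a genuine gap, and it is not the one you flagged. The Kraft--Chaitin theorem requires the set of requests to be computably enumerable, hence it requires the family $(\Sigma^{(k)})_k$ to be \emph{uniformly} c.e. But ${\mathcal{EH}}^s_n$ as defined in the paper is a bare infimum over c.e.\ $n$-covers, so for each $k$ you only obtain the \emph{existence} of some c.e.\ cover of weight $\leq 2^{-k}$, with no computable passage from $k$ to an index for $\Sigma^{(k)}$; your combined request set need not be c.e., and the construction stalls there. In fact, read literally, that non-uniform definition trivializes on singletons: $\{A{\upharpoonright}n\}$ is a finite, hence c.e., $n$-cover of $\{A\}$ of weight $2^{-sn}$, so ${\mathcal{EH}}^s(\{A\})=0$ for every $A$ and every $s>0$, contradicting the theorem (take $A=\Omega$). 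The intended notion of effective dimension builds uniformity in (a single uniformly c.e.\ test, equivalently a single c.e.\ $s$-gale), and under that reading your $2^{-k}$ bookkeeping does close the argument, because the requests are then enumerable; this is also exactly why the textbook proofs route the lower bound through one universal object (the optimal c.e.\ supergale, or a priori complexity $KM$ together with $K(\sigma)\leq KM(\sigma)+K(|\sigma|)+O(1)$), sidestepping the need to enumerate a family of covers. A second, smaller point: to make the $+k$ overhead vanish you need $k/m_k\to 0$, and $m_k\geq n_k\to\infty$ does not give this---a weight bound of $2^{-k}$ only forces $|\sigma|\geq k/s$, so $(k+O(1))/m_k$ could stay near $s$ and you would only conclude $\liminf_n K(A{\upharpoonright}n)/n\leq 2s$. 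You must choose the subsequence so that, say, $n_k\geq k^2$, which the uniform definition permits.
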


Moreover, there is a link from effective Hausdorff dimension to a notion that is central to probability theory: \emph{Martingales}. Martingales indicate expected outcomes of betting strategies. Lutz introduced in \cite{Lutz:2000} an adaptation of this notion that can be linked to effective Hausdorff dimension (or constitute an alternative definition for that matter).

\begin{definition}
An $s$-gale is a function $d: 2^{<\omega} \to \mathbb R^{\geq 0}$ such that $d(\sigma) = \frac{d(\sigma 0) + d(\sigma 1)}{2^s}$.
\end{definition}
  
This is a generalization of `gales' (as introduced/simplified by \cite{Levy:1937}) where $d(\sigma) = \frac{d(\sigma 0) + d(\sigma 1)}{2}$ expresses a certain fairness condition of the betting strategy. In particular, one can see $d$ as a pay-off function where the equality expresses that your expectation is to not loose nor gain money.
  
  % cite Levy 244 and possibly elaborate on the link to classical martingales?
%  also, some further info from Page 235?
We say that a certain gale $d$ \emph{succeeds} on $A$ whenever $\limsup_{n\to \infty} d (A\upharpoonright n) = \infty$.
  
The \emph{Success set} of $d$ is the collection of all $A$ on which $d$ succeeds and is denoted by $S[d]$. The link from Hausdorff dimension to these gales is given by a theorem by Lutz:
  
\begin{theorem}[Lutz \cite{Lutz:2000}]\label{theorem:Lutz}
  \[
  {\sf dim_{EH}}(X) = \inf \{ q \in \mathbb Q \mid X \subseteq S[d] \mbox{ for some $q$-gale $d$}\}.
  \]
\end{theorem}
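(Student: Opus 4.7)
The plan is to prove the two inequalities separately, exploiting a constructive analogue of the Kraft-like duality between prefix-free covers and supermartingale-type functions. Throughout, ``constructive $q$-gale'' will mean one that is lower semi-computable, the natural constructivization matching the c.e.\ covers used to define $\mathcal{EH}^s$. The single technical tool that I keep returning to is the elementary induction showing that for any $q$-gale $d$ and any prefix-free antichain $A \subseteq 2^{<\omega}$,
\[
\sum_{\sigma \in A} 2^{-q|\sigma|} d(\sigma) \;\leq\; d(\lambda),
\]
with equality when $A$ is a maximal antichain covering $2^\omega$.

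For the upper bound ${\sf dim_{EH}}(X) \leq \inf\{q : \exists\text{ constructive } q\text{-gale } d \text{ with } X \subseteq S[d]\}$, I fix such a $d$ and let $\Sigma_n$ consist of the $\preceq$-minimal strings of length at least $n$ on which $d$ first exceeds $2^n$. Lower semi-computability of $d$ makes $\Sigma_n$ uniformly c.e.\ in $n$, and every $A \in S[d]$ has arbitrarily long prefixes where $d$ is arbitrarily large, so $\Sigma_n$ is an $n$-cover of $X$. Applying the Kraft-like bound along the antichain $\Sigma_n$ gives
\[
\sum_{\sigma \in \Sigma_n} 2^{-q|\sigma|} \;\leq\; 2^{-n} \sum_{\sigma \in \Sigma_n} 2^{-q|\sigma|} d(\sigma) \;\leq\; 2^{-n} d(\lambda),
\]
so $\mathcal{EH}^q(X) = 0$, and hence ${\sf dim_{EH}}(X) \leq q$.

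For the reverse inequality, I fix rationals ${\sf dim_{EH}}(X) < q' < q$ and, from $\mathcal{EH}^{q'}(X)=0$, obtain uniformly c.e.\ $n$-covers $\Sigma_n$ of $X$ with $\sum_{\sigma \in \Sigma_n} 2^{-q'|\sigma|} \leq 2^{-n}$. To each $\sigma \in \Sigma_n$ I associate its canonical ``gale indicator''
\[
e_\sigma(\tau) \;=\; \begin{cases} 2^{-q'(|\sigma|-|\tau|)} & \text{if } \tau \preceq \sigma, \\ 2^{(q'-1)(|\tau|-|\sigma|)} & \text{if } \sigma \preceq \tau, \\ 0 & \text{otherwise,}\end{cases}
\]
which a short case analysis shows to be a $q'$-gale with $e_\sigma(\lambda) = 2^{-q'|\sigma|}$ and $e_\sigma(\sigma) = 1$. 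Setting $d := \sum_n \sum_{\sigma \in \Sigma_n} e_\sigma$ yields a constructive $q'$-gale with $d(\lambda) \leq \sum_n 2^{-n} < \infty$. The rescaling $d^{*}(\tau) := 2^{(q-q')|\tau|} d(\tau)$ is then a constructive $q$-gale, since the extra factor $2^{q-q'}$ per level converts the $q'$-gale equation into the $q$-gale equation.

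The main obstacle is the divergence step in the reverse direction. Because $q'$ may be strictly less than $1$, each summand $e_\sigma$ actually \emph{decays} above $\sigma$, so the bare $q'$-gale $d$ only guarantees $\limsup_k d(A\!\upharpoonright\! k) \geq 1$ on $A \in X$, not $\infty$. The rescaling $d \mapsto d^{*}$ is precisely the manoeuvre that turns the open gap $q - q' > 0$ into unbounded growth: for any $A \in X$, the unique prefix $\sigma_n \in \Sigma_n$ of $A$ satisfies $|\sigma_n| \geq n$, so $d^{*}(\sigma_n) \geq 2^{(q-q')|\sigma_n|} \cdot 1 \geq 2^{(q-q')n} \to \infty$, while $d^{*}(\lambda) = d(\lambda) < \infty$ and lower semi-computability is preserved. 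Thus $q$ lies in the infimum set, and letting $q$ range over the rationals strictly above ${\sf dim_{EH}}(X)$ closes both inequalities.
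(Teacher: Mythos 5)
This theorem appears in the paper's literature survey (Part III) and is stated there as a known result of Lutz with a citation but \emph{without any proof}, so there is no in-paper argument to compare yours against; I can only judge the proposal on its own terms. On those terms it is essentially the standard and correct proof of the gale characterization, and the two directions are handled the right way: the Kraft-type inequality $\sum_{\sigma\in A}2^{-q|\sigma|}d(\sigma)\leq d(\lambda)$ (which follows from the additivity of $\mu(\sigma):=2^{-q|\sigma|}d(\sigma)$ under the gale identity) does give the covers for the easy direction, your ``gale indicator'' $e_\sigma$ does satisfy the $q'$-gale equation in all four cases with $e_\sigma(\lambda)=2^{-q'|\sigma|}$ and $e_\sigma(\sigma)=1$, and the rescaling $d^*(\tau)=2^{(q-q')|\tau|}d(\tau)$ is exactly the correct device for converting ``$\limsup\geq 1$'' into genuine success; you were also right to read ``$q$-gale'' as \emph{lower semi-computable} $q$-gale, since with unrestricted gales the right-hand side would compute classical rather than effective Hausdorff dimension. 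Two points deserve one more sentence each. First, the set of $\preceq$-minimal strings of length $\geq n$ with $d(\sigma)>2^n$ need not itself be c.e.\ (minimality is a co-c.e.\ side condition); the standard fix is to enumerate strings as the lower approximation to $d$ certifies $d(\sigma)>2^n$ and to discard any string with an already-enumerated prefix, which yields a c.e.\ antichain that still covers $S[d]$ and still satisfies the Kraft bound. Second, the hypothesis ${\mathcal{EH}}^{q'}(X)=0$ only gives, for each $n$, \emph{some} c.e.\ $n$-cover of small weight; to make $d=\sum_n\sum_{\sigma\in\Sigma_n}e_\sigma$ lower semi-computable you need the covers to be c.e.\ \emph{uniformly} in $n$, which requires the standard trick of running all c.e.\ sets in parallel with weight-truncated enumerations. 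Neither point is a genuine gap --- both are routine --- but a complete write-up should include them.
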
  
%  Refer to 252,254

In the context of this paper it is good to mention that other notions of dimension also have their effective counterparts. In particular, Reimann studied an effectivization of box counting dimension in \cite{Reimann:2004} and the corresponding relations to the other complexity notions are similar to the ones mentioned here.

Also, for various dimensions the \emph{computable} versions have been studied, where the open covers are no longer required to be c.e.\ but rather computable. We refer the reader to \cite{DowneyHirschfeldt:2010} for further details.

\subsection{Our results}

In this section we have tried to present a selection of readily accessible results in the literature that relate fractal dimension to other notions of complexity.

We mentioned results of Braverman and Yampolsky and the generalization thereof by Chong in \ref{section:ComputabilityPropertiesJuliaSets}. These results related computational (Turing degrees) properties of Julia sets to the same computational properties (Turing degrees) of the parameter that generates the Julia set. In a sense, this is not a result that links Hausdorff dimension to a different notion of complexity. However, since it is one of the few results on computational properties of fractals, we have decided to include it in the overview.

Next, in \ref{section:EffectiveDimensionAndComputations} we only worked in the realm of Cantor space. There we considered an effectivization of Hausdorff dimension and this notion was linked to Turing degrees as in Theorem \ref{theorem:Miller}, to Kolmogorov complexity as in Theorem \ref{theorem:Mayordomo} and to martingales as in Theorem \ref{theorem:Lutz}.

Effective Hausdorff dimension however works with highly idealized notions relatively high up in the computational hierarchy. Our results involve complexity classes which are more down-to-earth like $\sf PTIME$ and $\sf PSPACE$. Moreover, in our theorems the two complexity notions that are related --computational complexity versus fractal dimension-- are not applied to exactly the same object as is the case in the earlier mentioned results. Rather they link two attributes of a small Turing machine: the computational (runtime) complexity on the one hand and the fractal dimension of the corresponding space-time diagrams on the other hand. It is in these respects that, to the best of our knowledge, our results are new in their kind.

\subsection*{Acknowledgements}
JJ wishes to thank Jos\'e Mar\'{\i}a Amig\'o Garc\'{\i}a for his kind support and an invited research stay at the University of Miguel Hernandez in Elche. JJ received support from the Generalitat de Catalunya under grant number 2009SGR-1433 and from the Spanish Ministry of Science and Education under grant numbers MTM2011-26840, and MTM2011- 25747. We acknowledge support from the project FFI2011-15945-E (Ministerio de Econom\'{\i}a y Competitividad, Spain).

%\newpage
%\addcontentsline{toc}{section}{References}

\bibliographystyle{plain}
\bibliography{fractal}

\end{document}